\newtheorem{Thm}{Theorem}
\newtheorem{Prop}[Thm]{Proposition}
\newtheorem{Alg}[Thm]{Algorithm}
\newcommand{\Then}{\quad\Rightarrow\quad}
\newcommand{\Iff}{\quad\Leftrightarrow\quad}
\newcommand{\Ordinals}{\mathbb{N}}
\newcommand{\Cardinals}{\mathbb{N}_0}
\newcommand{\Integers}{\mathbb{Z}}
\newcommand{\Reals}{\mathbb{R}}
\newcommand{\Complex}{\mathbb{C}}
\newcommand{\Set}[1]{\{{#1}\}}
\newcommand{\Cardinality}[1]{|{#1}|}
\newcommand{\IntegerSet}[1]{[{#1}]}
\newcommand{\Iverson}[1]{[{#1}]}
\newcommand{\Modulus}[1]{|{#1}|}
\newcommand{\BigModulus}[1]{\left|{#1}\right|}
\DeclareMathOperator{\argmin}{argmin}
\DeclareMathOperator{\diam}{diam}
\newcommand{\MinOp}{\lor}
\newcommand{\MaxOp}{\land}
\newcommand{\Floor}[1]{\lfloor{#1}\rfloor}
\newcommand{\DiffVecOp}{\triangledown}
\newcommand{\Space}{X}
\newcommand{\Metric}{d}
\newcommand{\FinVol}{\Lambda}
\newcommand{\IsRHC}[2]{\mathcal{H}_{#1}(#2)}
\newcommand{\GenFunSym}{Z}
\newcommand{\GenFunAt}[2]{\GenFunSym(#1,#2)}
\newcommand{\SingG}{{\rho^\star}}
\newcommand{\CaseK}{k}
\newcommand{\CaseC}{c}
\newcommand{\BlockKG}[2]{z_\CaseK(#1,#2)}
\newcommand{\BlockCG}[2]{z_\CaseC(#1,#2)}
\newcommand{\BlockCR}[1]{z_\CaseC(#1)}
\newcommand{\CondBlockKG}[2]{a_\CaseK(#1,#2)}
\newcommand{\CondBlockKR}[1]{a_\CaseK(#1)}
\newcommand{\CondBlockCG}[3]{a_\CaseC(#1,#2,#3)}
\newcommand{\CondBlockCR}[2]{a_\CaseC(#1,#2)}
\newcommand{\RootK}[1]{\rho_\CaseK(#1)}
\newcommand{\RootC}[1]{\rho_\CaseC(#1)}
\newcommand{\SingK}{{\rho^\star_\CaseK}}
\newcommand{\SingC}{{\rho^\star_\CaseC}}
\newcommand{\ContractionK}{f_k}
\newcommand{\ContractionC}{f_c}
\newcommand{\ValK}{\frac{k^k}{(k+1)^{k+1}}}
\newcommand{\ValC}{\frac{1}{e}}
\newcommand{\FreeK}[1]{F_\CaseK(#1)}
\newcommand{\FreeC}[1]{F_\CaseC(#1)}
\newcommand{\InvK}[1]{{\lambda_\CaseK(#1)}}
\newcommand{\InvC}[1]{{\lambda_\CaseC(#1)}}
\newcommand{\NuX}[1]{{\nu^{#1}}}
\newcommand{\TMO}{^{-1}}
\newcommand{\Cluster}[1]{G(#1)}
\newcommand{\Ursell}[1]{U(#1)}
\newcommand{\SpanningTreesOf}[1]{\mathcal{T}_{#1}}
\newcommand{\SpanningGraphsOf}[1]{\mathcal{G}_{#1}}
\newcommand{\PartitionScheme}{S}
\newcommand{\SingletonTrees}[1]{\mathcal{S}_{#1}}
\newcommand{\SchemeOne}{One}
\newcommand{\SchemePenrose}{Pen}
\newcommand{\Edges}[1]{E(#1)}
\newcommand{\Tree}{\mathbb{T}}
\newcommand{\SpanningTrees}[1]{\mathcal{T}_{#1}}
\newcommand{\TreePath}[2]{{P(#1,#2)}}
\newcommand{\Parent}[1]{{\mathfrak{p}(#1)}}
\newcommand{\Progressing}{\mathbf{P}}
\newcommand{\Reversing}{\mathbf{R}}
\newcommand{\SideStatus}{d}
\newcommand{\SideClass}[2]{[#1]_{(#2)}}
\newcommand{\SideEquivalent}[1]{\sim_{(#1)}}
\newcommand{\SideEquivalentKMinusOne}[1]{\SideEquivalent{k-1}}
\newcommand{\SideEquivalentK}{\SideEquivalent{k}}
\newcommand{\SideEquivalentKPlusOne}{\SideEquivalent{k+1}}
\newcommand{\ActionRemove}{REMOVE}
\newcommand{\ActionSelect}{SELECT}
\newcommand{\StepStyle}[1]{\textbf{(#1)}}
\newcommand{\StepDirProgBoundary}{\StepStyle{dpb}}
\newcommand{\StepDirProgIgnored}{\StepStyle{dpi}}
\newcommand{\StepDirProgParent}{\StepStyle{dpp}}
\newcommand{\StepDirProgUncles}{\StepStyle{dpu}}
\newcommand{\StepDirProgCousins}{\StepStyle{dpc}}
\newcommand{\StepDirRevBoundary}{\StepStyle{drb}}
\newcommand{\StepDirRevIgnored}{\StepStyle{dri}}
\newcommand{\StepDirRevParent}{\StepStyle{drp}}
\newcommand{\StepDirRevUnclesReversing}{\StepStyle{drur}}
\newcommand{\StepDirRevCousins}{\StepStyle{drc}}
\newcommand{\AdmissibleEdgesTree}[1]{\mathcal{A}_{#1}(\Tree)}
\newcommand{\ConflictingEdgesTree}[1]{\mathcal{C}_{#1}(\Tree)}
\newcommand{\Abstract}{
\begin{abstract}
We revisit the smallest non-physical singularity of the hard-sphere model in one dimension, also known as Tonks gas. We give an explicit expression of the free energy and reduced correlations at negative real fugacity and elaborate the nature of the singularity: the free energy is right-continuous, but its derivative diverges. We derive these results in several novel ways: First, by scaling up the discrete solution. Second, by an inductive argument on the partition function à la Dobrushin. Third, by a perfect cluster expansion counting the Penrose trees in the Mayer expansion perfectly. Fourth, by an explicit construction of Shearer's point process, the unique $R$-dependent point process with an $R$-hard-core. The last connection yields explicit and optimal lower bounds on the avoidance function of $R$-dependent point processes on the real line.
\end{abstract}
}
\newcommand{\TitleFull}{Shearer's point process and the hard-sphere model in one dimension}
\newcommand{\TitleShort}{Shearer's PP 1D}
\newcommand{\TitlePDF}{Shearer's\ PP\ \&\ hard-sphere\ in\ 1D} 
\newcommand{\AuthorsFull}{Hofer-Temmel Christoph (math@temmel.me)\thanks{The author acknowledges the support of the VIDI project ``Phase transitions, Euclidean fields and random fractals'', NWO 639.032.916}}
\newcommand{\AuthorsShort}{Hofer-Temmel Christoph}
\newcommand{\Keywords}{Keywords:
Shearer's point process,
hard-sphere model,
Lovász Local Lemma,
Tonks gas,
cluster expansion,
Lambert W function}
\newcommand{\Head}{
 \maketitle
 \Abstract{}
 \Keywords{}\\\MSC{}
 \tableofcontents
 \listoffigures
}
\title{\TitleFull{}}
\author{\AuthorsFull{}}
\date{}
\begin{document}

\Head{}

\section{Introduction}
\label{sec_intro}

The present paper investigates the behaviour of the partition function of the one-dimensional hard-sphere model~\cite{Ruelle__StatisticalMechanics_RigorousResults__Ben_1969}, also known as
Tonks gas~\cite{Tonks__TheCompleteEquationOfStateOfOneTwoAndThreeDimensionalGasesOfElasticHardSpheres__PhysRev_1936}, at small fugacities. Although the one-dimensional case is not as physically relevant as the higher-dimensional cases, it is interesting because explicit and beautiful calculations are possible. This paper collects known results and adds new ones to complete the picture.\\

The partition function has a smallest non-physical singularity at negative real fugacity. This fugacity limits the range where high-temperature (equals low fugacity) cluster expansions à la Mayer~\cite{Mayer_Montroll__MolecularDistribution__JCP_1941} are possible and where analyticity
of the free energy and reduced correlations is easily established. The value of the singularity is already known for a long time and has been derived by cluster expansion techniques~\cite{Brydges_Imbrie__DimensionalReductionFormulasForBranchedPolymerCorrelationFunctions__JSP_2003,Bernardi__SolutionToACombinatorialPuzzleArisingFromMayersTheoryOfClusterIntegrals__SemLotComb_2007}.\\

We give explicit expressions for the free energy and ratios of the partition function at negative real activities between zero and the smallest non-physical singularity. Usually, only bounds are given (and possible). Not surprising, the Lampert W function~\cite{Corless_Gonnet_Hare_Jeffrey_Knuth__OnTheLambertWFunction__AdvCompMath_1996,Bernardi__SolutionToACombinatorialPuzzleArisingFromMayersTheoryOfClusterIntegrals__SemLotComb_2007} and two discrete relatives make their appearance. We obtain these expressions by different approaches: scaling from the discrete case, cluster expansions or alternative viewpoints on the partition function via a combinatoric relationship to the avoidance function of a particular one-dependent point process. As a consequence, we shed more light on the nature of singularity: the free energy is right-continuous at the singularity, whereas its derivative diverges at this location.\\

The discrete one-dimensional hard-sphere model lives on $\Integers$ with a hard-core radius parametrised by $k$. We collect the results about the discrete case~\cite{Liggett_Schonmann_Stacey__DominationByProductMeasures__AoP_1997,Mathieu_Temmel__KIndependentPercolationOnTrees__SPA,Temmel__ShearersMeasureAndStochasticDominationOfProductMeasures__JTP_2014} and specialise results from~\cite{Temmel__SufficientConditionsForUniformBoundsInAbstractPolymerSystemsAndExplorativePartitionSchemes__JSP_2014}. We want to point out, that in this case a purely inductive approach à la Dobrushin~\cite{Dobrushin__PerturbationMethodsOfTheTheoryOfGibbsianFields__LNM_1996} yields the full picture and cluster expansion is not needed at all. The first way to approach the continuous model is to scale the resulting quantities towards a range $1$ hard-sphere model on $\Reals$. We recover the ubiquitous Lambert's W function. The scaling works for the critical values, but not for all related quantities.\\

The other approaches work directly with the continuous model. In~\cite{Temmel__ShearersPointProcessTheHardSphereGasAndAContinuumLovaszLocalLemma}, we generalised the inductive approach of Dobrushin~\cite{Dobrushin__PerturbationMethodsOfTheTheoryOfGibbsianFields__LNM_1996} to the continuous setting. A variant thereof, inspired by the calculations in~\cite{Liggett_Schonmann_Stacey__DominationByProductMeasures__AoP_1997}, identifies the smallest non-physical singularity directly and yields bounds on some of the related quantities.\\

For the continuous case, we extend the tree-operator approach for cluster expansions of discrete polymer models from~\cite{Temmel__SufficientConditionsForUniformBoundsInAbstractPolymerSystemsAndExplorativePartitionSchemes__JSP_2014} to the hard-sphere model. We exhibit a partition scheme of the cluster adapted to the one dimensional setting. The singleton trees of the partition scheme have an explicit structure, which allows us to derive cluster expansion expressions exactly, instead of just bounding them. This approach was partly motivated by the fact that~\cite{Fernandez_Procacci_Scoppola__TheAnalyticityRegionOfTheHardSphereGas_ImprovedBounds__JSP_2007}, the current best known cluster expansion approach using tree-operators for the hard-sphere model, does not attain the critical value in one dimension. We describe the free energy at negative real fugacity explicitly up to this singularity. At the fugacity approaches the singularity coming from zero, the free energy expression converges, but its derivative diverges.\\

Finally, we use the connection between the partition function of a $R$-hard-sphere model and the avoidance function of Shearer's point process~\cite{Scott_Sokal__TheRepulsiveLatticeGasTheIndependentSetPolynomialAndTheLovaszLocalLemma__JSP_2005,Temmel__ShearersPointProcessTheHardSphereGasAndAContinuumLovaszLocalLemma}.
Shearer's point process is the unique one-independent point process with one-hard-core configurations. An alternative proof of a lower bound of the singularity comes from an explicit construction of Shearer's point process. The construction is a one-sided variant of a Matérn-type deletion rule~\cite{Matern__SpatialVariation_StochasticModelsAndTheirApplicationToSomeProblemsInForestSurveysAndOtherSamplingInvestigations__MFSS_1960}, which seems to seems to depend on the chordal (i.e., tree-like) structure of $\Reals$. See~\cite{Lehner_Temmel__CliqueTreesOfInfiniteLocallyFiniteChordalGraphs} for the case of chordal graphs. As some of the results in the discrete case have been used to solve problems in  $k$-dependent percolation on trees~\cite{Mathieu_Temmel__KIndependentPercolationOnTrees__SPA}, we intend the explicit expressions of the continuous case as a tool to treating dependent continuous Boolean percolation problems.\\

A related area of research are the combinatorics of the virial expansion~\cite{Jansen__ClusterAndVirialExpansionsForTheMultiSpeciesTonksGas__Arxiv_1503_02338v1,Ramawadh_Tate__VirialExpansionBoundsThroughTreePartitionSchemes__Arxiv_1501_00509v1}. These are not a topic of discussion here.\\

The following subsections introduce the key terms in detail. Section~\ref{sec_results} summarises the results. The scaling approach is in section~\ref{sec_scaling}, the inductive approach in section~\ref{sec_ind}, the tree-operator approach for cluster expansions in section~\ref{sec_ce} and the constructions of Shearer's point process in theorems~\ref{thm_shearerK} and~\ref{thm_shearerC}.

\subsection{The generating function}
\label{sec_gf}

Let $(\Space,\Metric)$ be a Polish space. A set of points in $\Space$ is $R$-hard-core, if all pairs of points have mutual distance at least $R$. Denote by $\IsRHC{.}$ the indicator function of $R$-hard-core point sets. On a bounded volume $\FinVol\Subset\Space$ and for weight $z\in\Complex$, we regard the generating function
\begin{equation}\label{eq_gf}
 \GenFunAt{z}{\FinVol}:=
 \sum_{n=0}^\infty \frac{z^n}{n!}
 \int_{\FinVol^n} \IsRHC{R}{\Set{x_1,\dotsc,x_n}}
 \prod_{i=1}^n dx_i\,,
\end{equation}
with integration to some standard measure on $\Space$. Let $A$ be a Borel set with $\diam(A)<R$. There is a deletion-contraction identity
\begin{equation}\label{eq_fe}
 \GenFunAt{z}{\FinVol}
 = \GenFunAt{z}{\FinVol\setminus A}
 + z \int_A \GenFunAt{z}{\FinVol\setminus B(x,R)} dx\,.
\end{equation}
For $\FinVol\Subset\Space$ fixed, the partition function has its smallest root on the negative real axis. A key role is played by the quantity
\begin{equation}\label{eq_sing}
 \SingG:=\sup\Set{\rho\ge 0\mid
  \forall \FinVol\Subset\Space:
  \GenFunAt{-\rho}{\FinVol}> 0
 }\,.
\end{equation}
It is a singularity of the cluster expansion
\begin{equation}\label{eq_ce}
 \log\GenFunAt{-\rho}{\FinVol}
 = -\sum_{n=1}^\infty \frac{\rho^n}{n!}
   \int_{\FinVol^n} \Ursell{x_1,\dotsc,x_n} dx_1\dotsm dx_n\,,
\end{equation}
where the Ursell coefficients~\cite{Ursell__TheEvaluationOfGibbsPhaseIntegralForImperfectGases__MPCPS_1927} $\Ursell{x_1,\dotsc,x_n}\in\Cardinals$ are combinatorial expressions depending on the whether pairs of points in $\Set{x_1,\dotsc,x_n}$ have distance less than $R$.

\subsection{Shearer's point process}
\label{sec_shearer}

A point process $\xi$ on $(\Space,\Metric)$ is $R$-dependent, if, for every pair of subsets $\FinVol_1,\FinVol_2$ with $\Metric(\FinVol_1,\FinVol_2)\ge R$, the projections $\FinVol_1\xi$ and $\FinVol_2\xi$ are independent. For a given intensity, there is at most one $R$-dependent point process with an almost-sure $R$-hard-core. This is Shearer's point process~\cite{Shearer__OnAProblemOfSpencer__Comb_1985,Temmel__ShearersPointProcessTheHardSphereGasAndAContinuumLovaszLocalLemma}.\\

The avoidance function of Shearer's point process with intensity $\rho$ is $\GenFunSym$ evaluated at weight $-\rho$. Thus, Shearer's point process exists, iff it has intensity $\rho\le\SingG$. If Shearer's point process exists, it has the minimal avoidance function among all $R$-dependent point processes of the same intensity.

\subsection{The hard-sphere model}
\label{sec_hardsphere}

The hard-sphere model with radius $R$ is the unique Markov point process on $(\Space,\Metric)$ with interaction range $R$ and an almost-sure $R$-hard-core. For a bounded subset $\FinVol\Subset\Space$, \emph{fugacity} $\lambda\in[0,\infty[$ and empty boundary conditions, its partition function is $\GenFunAt{\lambda}{\FinVol}$.\\

The behaviour of $\log\GenFunAt{z}{\FinVol}$ and ratios of the partition function at different volumes in low fugacity (high-temperature) setting are a topic of longstanding interest~\cite{Mayer_Montroll__MolecularDistribution__JCP_1941,Ruelle__StatisticalMechanics_RigorousResults__Ben_1969}. Uniform bounds (in both volume and fugacity) in a small disc in $\Complex$ around the origin are possible because of the inequality~\cite[Theorem 2.10]{Fernandez_Procacci__ClusterExpansionForAbstractPolymerModels_NewBoundsFromAnOldApproach__CMP_2007}:
\begin{equation}
 \BigModulus{\log\GenFunAt{z}{\FinVol}}\le -\log\GenFunAt{\Modulus{z}}{\FinVol}\,.
\end{equation}
Knowledge about the location of $\SingG$ is crucial to expansions of $\log\GenFunAt{z}{\FinVol}$. Because $\SingG$ represents a negative fugacity, it is also called a non-physical singularity.

\subsection{Lambert W function}
\label{sec_lampert}

The Lambert $W$ function~\cite{Corless_Gonnet_Hare_Jeffrey_Knuth__OnTheLambertWFunction__AdvCompMath_1996} are the solutions of the equation
\begin{equation*}
 W(z)\exp(W(z))=z
\end{equation*}
in the complex plane. For $\rho\in[0,\ValC]$, there is a real branch $W_0$ of solutions in $[0,1]$ solving the equation $W(-\rho)\exp(W(-\rho))=-\rho$ uniquely. This solution plays a key role in the low fugacity case of the Tonks gas~\cite[Section 3]{Caillol__SomeApplicationsOfTheLambertWFunctionToClassicalStatisticalMechanics__JPA_2003} and appears also in our results. The reason is that the Lambert W function has a close relationship with the generating function $T$ of rooted trees on $n$ labelled points~\cite[Section 2]{Corless_Gonnet_Hare_Jeffrey_Knuth__OnTheLambertWFunction__AdvCompMath_1996}: $T(x)=-W(-x)$ and $T$ converges, iff $x\in[0,\ValC]$.

\section{Results}
\label{sec_results}

\subsection{The discrete case}
\label{sec_K}

For $k\in\Cardinals$, we regard the space $(\Integers,\Modulus{.})$ with hard-core radius $k+1$ and integration with respect to the counting measure. The subscript $k$ marks related quantities. Instead of point processes, we talk of a Bernoulli random field (short BRF), i.e., a collection of $\Set{0,1}$-valued rvs. This section summaries already known results for comparison with the continuous case in section~\ref{sec_C}.\\

We use the convention $0^0=1$. Consider the map
\begin{equation}\label{eq_funK}
 \ContractionK:
  \quad[0,1]\to\left[0,\ValK\right]
  \quad x\mapsto x(1-x)^k\,.
\end{equation}
For $\rho\in\left[0,\ValK\right]$, let $\InvK{\rho}$ be the unique pre-image of $\ContractionK$ in $\left[0,\frac{1}{(k+1)}\right]$.

\begin{Thm}\label{thm_shearerK}
For $\rho\in[0,\ValK]$, let $X:=(X_n)_{n\in\Integers}$ be a Bernoulli product field with parameter $\InvK{\rho}$. Define the BRF $Y:=(Y_n)_{n\in\Integers}$ by
\begin{equation}\label{eq_construction_discrete}
 Y_n:=X_n \prod_{i=1}^k (1-X_{n-i})\,.
\end{equation}
Then $Y$ is Shearer's BRF on $\Integers$, for marginal probability $\rho$ and distance $(k+1)$. Shearer's BRF does not exist for higher marginal probabilities than $\ValK$.
\end{Thm}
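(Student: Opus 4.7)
The plan is to verify that the BRF $Y$ defined by~\eqref{eq_construction_discrete} satisfies three properties: (i) its configurations are almost-surely $(k+1)$-hard-core, (ii) it is $(k+1)$-dependent, and (iii) each $Y_n$ has marginal probability $\rho$. By the uniqueness of Shearer's BRF recalled in Section~\ref{sec_shearer}, these three properties are enough to identify $Y$ as Shearer's BRF of marginal $\rho$ at radius $k+1$.

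Each verification is short. For (i): if $Y_n = Y_m = 1$ with $n < m$, then $Y_m = 1$ forces $X_{m-i} = 0$ for every $i \in \{1, \ldots, k\}$, while $Y_n = 1$ forces $X_n = 1$; hence $n \notin \{m-k, \ldots, m-1\}$, so $m - n \geq k+1$. For (ii): each $Y_n$ is a function of the window $(X_{n-k}, \ldots, X_n)$ only, so whenever $d(\FinVol_1, \FinVol_2) \geq k+1$ the $X$-windows underlying $Y_{\FinVol_1}$ and $Y_{\FinVol_2}$ are disjoint subsets of $\Integers$, and independence follows because $X$ is a product field. For (iii): by the defining property of $\InvK{\rho}$,
\[
 P(Y_n=1)=\InvK{\rho}\,(1-\InvK{\rho})^k=\ContractionK(\InvK{\rho})=\rho.
\]

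For the second assertion, the non-existence of Shearer's BRF when $\rho > \ValK$, I would invoke the already-known identification $\SingK = \ValK$ for the one-dimensional discrete hard-sphere model (collected from the cited results in Section~\ref{sec_K}). If Shearer's BRF of marginal $\rho$ existed, its finite-volume avoidance functions would coincide with $\GenFunAt{-\rho}{\FinVol}$ on every $\FinVol\Subset\Integers$ and hence be strictly positive, forcing $\rho \leq \SingK = \ValK$. The only real obstacle is a bookkeeping one: matching (i)--(iii) against the paper's formulation of Shearer's BRF and its uniqueness in the discrete setting, and citing the correct source for $\SingK = \ValK$; all of the algebraic and combinatorial work is immediate from the construction.
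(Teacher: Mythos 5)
Your proof is correct and matches the paper's intended argument: the paper itself only cites the construction (from Mathieu--Temmel, inspired by Liggett--Schonmann--Stacey) and derives non-existence from $\SingK=\ValK$, i.e.\ from~\eqref{eq_mainK_sing}, and your three verifications (hard-core, $(k+1)$-dependence via disjoint $X$-windows, marginal $\rho$ via $\ContractionK(\InvK{\rho})=\rho$) combined with the uniqueness recalled in Section~\ref{sec_shearer} are exactly the standard block-factor argument those references carry out. One cosmetic point: avoidance probabilities are a priori only nonnegative rather than ``strictly positive,'' so the non-existence step really uses that $\rho>\SingK$ forces some $\GenFunAt{-\rho}{\FinVol}$ to be strictly negative (which follows from~\eqref{eq_mainK_roots_mon} and~\eqref{eq_mainK_roots_lim}) --- but the paper elides this in precisely the same way.
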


The construction is from~\cite{Mathieu_Temmel__KIndependentPercolationOnTrees__SPA}, inspired by~\cite{Liggett_Schonmann_Stacey__DominationByProductMeasures__AoP_1997} and may be seen as a special case of the general construction on chordal graphs~\cite{Lehner_Temmel__CliqueTreesOfInfiniteLocallyFiniteChordalGraphs}. In particular, the BRF $Y$ is a \emph{$(k+1)$-block factor} of the BRF $X$. The non-existence follows from~\eqref{eq_mainK_sing} in theorem~\ref{thm_mainK}.\\

We have $\BlockKG{\rho}{n}:=\GenFunAt{-\rho}{\Set{1,\dotsc,n}}$, with $\BlockKG{\rho}{0}=1$ and $\BlockKG{\rho}{1}=1-\rho$. Let $\RootK{n}$ be the smallest root of $\BlockKG{\rho}{n}$. If $\BlockKG{\rho}{n-1}>0$, then let $\CondBlockKG{\rho}{n}:=\frac{\BlockKG{\rho}{n}}{\BlockKG{\rho}{n-1}}$. Let $\displaystyle\FreeK{\rho}:=\lim_{n\to\infty}\frac{\log\BlockKG{\rho}{n}}{n}$. The singularity reduces to
\begin{equation*}
 \SingK
 :=
 \sup\Set{\rho\ge 0\mid
  \forall n\in\Ordinals: \CondBlockKG{\rho}{n}>0
 }
 =\inf\Set{\RootK{n}\mid n\in\Cardinals}\,.
\end{equation*}

\begin{Thm}\label{thm_mainK}
For each $k\in\Cardinals$, we have
\begin{subequations}\label{eq_mainK}
\begin{gather}
 \label{eq_mainK_sing}
  \SingK = \ValK\,,
 \\\label{eq_mainK_roots_mon}
  \forall n\in\Cardinals:
  \quad
  \RootK{n}>\RootK{n+1}\text{ with equality if $k=0$}\,,
 \\\label{eq_mainK_roots_lim}
  \lim_{n\to\infty} \RootK{n} = \SingK\,,
 \\\label{eq_mainK_cond_mon}
  \forall\rho\in]0,\SingK],n\in\Ordinals:
  \quad
  \CondBlockKG{\rho}{n} > \CondBlockKG{\rho}{n+1}
  \text{ with equality if $k=0$}
  \,,
 \\\label{eq_mainK_block_mon}
  \forall\rho\in]0,\SingK],n\in\Ordinals:
  \quad
  \BlockKG{\rho}{n} > \BlockKG{\rho}{n+1}
  \,,
 \\\label{eq_mainK_cond_lim}
  \forall\rho\in[0,\SingK]:
  \quad
  \lim_{n\to\infty}\CondBlockKG{\rho}{n} = 1-\InvK{\rho}
  \,,
 \\\label{eq_mainK_free_con}
  \forall\rho\in[0,\SingK]:
  \quad
  \FreeK{\rho}= \log(1-\InvK{\rho})\,,
 \\\label{eq_mainK_free_div}
  \lim_{\rho\to{\SingK}^{-}}
  \frac{\partial}{\partial\rho} \FreeK{\rho}
  = \infty\text{ if }k\not=0\,.
\end{gather}
\end{subequations}
\end{Thm}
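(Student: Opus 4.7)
The plan is to derive everything from the deletion-contraction identity~\eqref{eq_fe}, which in the discrete $k$-hard-core setting yields the linear recurrence
$$\BlockKG{\rho}{n} = \BlockKG{\rho}{n-1} - \rho\,\BlockKG{\rho}{n-k-1}\qquad (n \geq k+1),$$
with initial data $\BlockKG{\rho}{j} = 1 - j\rho$ for $0 \leq j \leq k$. Dividing by $\BlockKG{\rho}{n-1}$ gives the nonlinear form
$$\CondBlockKG{\rho}{n} = 1 - \frac{\rho}{\CondBlockKG{\rho}{n-1}\cdots\CondBlockKG{\rho}{n-k}}\,,$$
whose constant fixed points are exactly the values $a = 1 - x$ with $\ContractionK(x) = \rho$. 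Since $\ContractionK$ attains its maximum $\ValK$ at $x = 1/(k+1)$, the range $\rho \in [0, \ValK]$ is precisely where real fixed points exist, and the larger one sits at $a = 1 - \InvK{\rho} \in [k/(k+1), 1]$.

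The first step is to prove (d), (e), (f) and the lower bound $\SingK \geq \ValK$ by a single induction on $n$ for fixed $\rho \in (0, \ValK]$, maintaining jointly the invariants $\CondBlockKG{\rho}{n} > 1 - \InvK{\rho}$ and $\CondBlockKG{\rho}{n} < \CondBlockKG{\rho}{n-1}$. The lower bound follows from the defining identity $\InvK{\rho}(1-\InvK{\rho})^k = \rho$: the hypothesis gives $\rho/\prod_{i=1}^k \CondBlockKG{\rho}{n-i} < \rho/(1-\InvK{\rho})^k = \InvK{\rho}$, so $\CondBlockKG{\rho}{n} > 1 - \InvK{\rho}$, keeping everything positive and in particular $\BlockKG{\rho}{n} > 0$. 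For strict monotonicity the algebraic identity
$$\CondBlockKG{\rho}{n+1} - \CondBlockKG{\rho}{n} = \frac{\rho\,(\CondBlockKG{\rho}{n} - \CondBlockKG{\rho}{n-k})}{\CondBlockKG{\rho}{n-k} \prod_{i=1}^k \CondBlockKG{\rho}{n+1-i}}$$
carries the correct sign through the inductive chain $\CondBlockKG{\rho}{n-k} > \cdots > \CondBlockKG{\rho}{n}$ (collapsing to equality when $k = 0$). A bounded decreasing sequence converges, and passing to the limit in the recursion forces the limit to be a real fixed point; continuity at $\rho = 0$ (where $\CondBlockKG{\rho}{n} \equiv 1$) selects the larger one $1 - \InvK{\rho}$, which gives (f), while (e) follows from $\CondBlockKG{\rho}{n} < 1$.

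Next I would complete (a) by proving $\SingK \leq \ValK$: the characteristic polynomial $\chi(\lambda) = \lambda^{k+1} - \lambda^k + \rho$ has positive real roots precisely for $\rho \in [0, \ValK]$, so for $\rho > \ValK$ the dominant eigenvalues are complex with positive modulus, forcing $\BlockKG{\rho}{n}$ to oscillate and eventually change sign. For (b) I induct on $n$: evaluating the linear recurrence at $\rho = \RootK{n}$ gives $\BlockKG{\RootK{n}}{n+1} = -\RootK{n}\,\BlockKG{\RootK{n}}{n-k}$, and by the inductive hypothesis $\RootK{n} < \RootK{n-1} < \cdots < \RootK{n-k}$, so $\BlockKG{\RootK{n}}{n-k} > 0$, hence $\BlockKG{\RootK{n}}{n+1} < 0$, producing a root $\RootK{n+1} \in (0, \RootK{n})$; for $k = 0$ the explicit formula $\BlockKG{\rho}{n} = (1-\rho)^n$ makes all roots coincide. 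Item (c) is then immediate since $\SingK = \inf_n \RootK{n}$ by definition.

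Finally, writing $\log \BlockKG{\rho}{n} = \sum_{i=1}^n \log \CondBlockKG{\rho}{i}$ and applying Cesàro to the convergent sequence from (f) gives (g). For (h), implicit differentiation of $\InvK{\rho}(1-\InvK{\rho})^k = \rho$ yields
$$\InvK{\rho}' = \frac{1}{(1-\InvK{\rho})^{k-1}\,(1 - (k+1)\InvK{\rho})};$$
as $\rho \to \SingK^-$ the inverse $\InvK{\rho}$ tends to $1/(k+1)$, so the second factor in the denominator vanishes while $1 - \InvK{\rho} \to k/(k+1) > 0$ (this is where the hypothesis $k \neq 0$ enters), hence $\partial_\rho \FreeK{\rho} = -\InvK{\rho}'/(1-\InvK{\rho})$ diverges in absolute value. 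I expect the main obstacle to be the upper bound $\SingK \leq \ValK$, because the Dobrushin-style fixed-point analysis of Step~1 says nothing about what happens past $\ValK$; one needs the separate linear-algebraic input on $\chi$ (or an explicit overshoot computation showing some $\CondBlockKG{\rho}{n}$ eventually turns non-positive) to close the gap.
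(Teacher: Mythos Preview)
Your plan is correct and follows the same inductive/Dobrushin route the paper uses: the ratio recursion $\CondBlockKG{\rho}{n}=1-\rho/\prod_{i=1}^{k}\CondBlockKG{\rho}{n-i}$ is exactly Section~\ref{sec_ind_k}, and your implicit-differentiation argument for~\eqref{eq_mainK_free_div} is equivalent to Proposition~\ref{prop_free_div_k}. In fact the paper \emph{cites} external references for~\eqref{eq_mainK_sing}--\eqref{eq_mainK_cond_lim} and only spells out the recursion and the derivative blow-up, so your self-contained treatment is more detailed than what the paper itself writes down.

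Two remarks. First, in the limit identification for~\eqref{eq_mainK_cond_lim} the ``continuity at $\rho=0$'' selection argument is superfluous and a little soft: your maintained lower bound $\CondBlockKG{\rho}{n}>1-\InvK{\rho}$ already pins the limit to the larger fixed point directly, since the only fixed points in $(0,1)$ are $1-\InvK{\rho}\ge k/(k+1)$ and a second one strictly below $k/(k+1)$. Second, your flagged obstacle is real. The characteristic-polynomial route to $\SingK\le\ValK$ needs, beyond the absence of positive real roots of $\chi$, that the coefficient of the dominant (complex or negative real) root in the expansion of $\BlockKG{\rho}{n}$ does not vanish for the given initial data; otherwise oscillation does not follow. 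This is true but requires an argument (e.g.\ a Vandermonde/Wronskian check, or the alternative overshoot computation you mention on the ratio map). The paper does not prove this step either and defers it to Liggett--Schonmann--Stacey.
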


Statements~\eqref{eq_mainK_sing},~\eqref{eq_mainK_roots_mon} and~\eqref{eq_mainK_roots_lim} are already in~\cite{Liggett_Schonmann_Stacey__DominationByProductMeasures__AoP_1997}. Statement~\eqref{eq_mainK_cond_mon} follows from the general statement in~\cite{Temmel__ShearersMeasureAndStochasticDominationOfProductMeasures__JTP_2014}, with~\eqref{eq_mainK_block_mon} being a corollary. Statement~\eqref{eq_mainK_cond_lim} is from~\cite{Mathieu_Temmel__KIndependentPercolationOnTrees__SPA}. Statement~\eqref{eq_mainK_free_con} follows from~\eqref{eq_mainK_cond_lim} by telescoping and the monotonicity~\eqref{eq_mainK_cond_mon}. The proof of statement~\eqref{eq_mainK_free_div} is in proposition~\ref{prop_free_div_k}.

\subsection{The continuous case}
\label{sec_C}

The continuous case is on $(\Reals,\Modulus{.})$ with hard-core radius $1$ and integration with respect to the Lebesgue measure. The subscript $c$ marks related quantities.\\

Consider the map
\begin{equation}\label{eq_funC}
 \ContractionC:
  \quad[0,\infty[\to\left[0,\ValC\right]
  \quad y\mapsto y e^{-y}\,,
\end{equation}
For $\rho\in[0,\ValC]$, let $\InvC{\rho}$ be the unique pre-image under $\ContractionC$ in $[0,1]$.

\begin{Thm}\label{thm_shearerC}
Let $\rho\in[0,\ValC]$ and $\xi$ be a homogeneous Poisson PP with intensity $\InvC{\rho}$. Define a PP $\eta$ by the density
\begin{equation}\label{eq_construction_continuous}
 \eta(dx)=1 \Iff \xi(dx)=1\text{ and }\xi(]x-1,x[)=0\,.
\end{equation}
Then $\eta$ is Shearer's PP on $\Reals$ with intensity $\rho$ and distance $1$. Shearer's PP does not exist for larger intensities than $\ValC$.
\end{Thm}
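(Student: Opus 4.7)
The plan is to verify that the point process $\eta$ defined by~\eqref{eq_construction_continuous} satisfies the three properties characterising Shearer's PP of intensity $\rho$ and hard-core radius $1$: (i) it has intensity $\rho$; (ii) its configurations are almost-surely $1$-hard-core; and (iii) it is $1$-dependent. The uniqueness statement recalled in section~\ref{sec_shearer} then forces $\eta$ to coincide with Shearer's PP. Local finiteness of $\eta$ is automatic since it is a thinning of the Poisson PP $\xi$.

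Properties (i) and (ii) are short. For (i) I would apply the Poisson thinning argument: the infinitesimal probability that $\eta(dx)=1$ factors as $\InvC{\rho}\,dx$ (the local intensity of $\xi$) times $\exp(-\InvC{\rho})$ (the probability that $\xi(]x-1,x[)=0$), because these two events concern disjoint regions of $\xi$. This yields intensity $\ContractionC(\InvC{\rho})=\rho$. For (ii), if $x<y$ both survived with $y-x<1$, then $x$ would be a point of $\xi$ lying in $]y-1,y[$, contradicting the survival condition for $y$.

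The decisive step, and the place where the one-sided nature of the deletion rule is essential, is (iii). The key observation is that for any Borel $B\subset\,]-\infty,a]$ the count $\eta(B)=\sum_{x\in\xi\cap B}\mathbf{1}\{\xi(]x-1,x[)=0\}$ is a functional of $\xi$ restricted to $]-\infty,a]$. Given Borel sets $\Lambda_1,\Lambda_2$ with $\Metric(\Lambda_1,\Lambda_2)\ge 1$, I would assume without loss that $\Lambda_1\subset\,]-\infty,a]$ and $\Lambda_2\subset[a+1,\infty[$, so that $\eta\cap\Lambda_1$ is measurable with respect to $\xi|_{]-\infty,a]}$ while $\eta\cap\Lambda_2$ is measurable with respect to $\xi|_{[a,\infty[}$. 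These restrictions overlap only on the singleton $\{a\}$, on which $\xi$ almost surely puts no mass, so independence reduces to the Poisson independence on disjoint Borel sets. A symmetric, two-sided deletion rule would destroy this clean factorisation at a half-line boundary; this is precisely where the chordal/linear structure of $\Reals$ enters.

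Non-existence for $\rho>\ValC$ is not intrinsic to the construction: since $\ContractionC$ attains its maximum $\ValC$ at $y=1$, the construction already realises every intensity in $[0,\ValC]$, and the non-existence beyond this range reduces to the identification $\SingC=\ValC$ established elsewhere (by scaling from theorem~\ref{thm_mainK}, or through the inductive argument of section~\ref{sec_ind}). The main obstacle I anticipate is making the measurability argument in (iii) fully rigorous despite $\eta$ being specified by a pointwise density condition on $\xi$; this can be handled by disintegrating $\xi$ over a $\pi$-system of bounded half-open intervals and verifying independence on the generators before extending.
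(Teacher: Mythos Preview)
The paper does not spell out a proof of this theorem: it treats the construction as essentially self-evident from the block-factor structure (by analogy with the discrete theorem~\ref{thm_shearerK}) and defers non-existence to the identification $\SingC=\ValC$ from theorem~\ref{thm_mainC}. Your proposal is therefore more detailed than the paper itself, and your strategy---verify intensity, hard-core, $1$-dependence, then invoke uniqueness from section~\ref{sec_shearer}---is exactly right. Parts (i) and (ii) are fine as written, and your handling of non-existence matches the paper's.

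There is, however, a genuine slip in step~(iii). The reduction ``assume without loss that $\Lambda_1\subset\,]-\infty,a]$ and $\Lambda_2\subset[a+1,\infty[$'' is not valid: take $\Lambda_1=\{0,2\}$ and $\Lambda_2=\{1\}$, which have mutual distance $1$ but cannot be separated by a point. A $\pi$-system argument will not rescue this, because the constraint $\Metric(\Lambda_1,\Lambda_2)\ge 1$ is a condition on the \emph{pair}, not a $\sigma$-algebra condition on either set individually. Fortunately the direct argument---the one you almost wrote---works without any reduction. For \emph{any} Borel $\Lambda$, the restriction $\eta|_\Lambda$ is measurable with respect to $\xi|_{\Lambda+(-1,0]}$, since a surviving point $x\in\Lambda$ is determined by $\xi$ on $]x-1,x]$. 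If $\Metric(\Lambda_1,\Lambda_2)\ge 1$, then $(\Lambda_1+(-1,0])$ and $(\Lambda_2+(-1,0])$ are disjoint: a common point would equal $x_1+s_1=x_2+s_2$ with $x_i\in\Lambda_i$ and $s_i\in(-1,0]$, forcing $|x_1-x_2|=|s_2-s_1|<1$. Poisson independence on disjoint Borel sets then gives independence of $\eta|_{\Lambda_1}$ and $\eta|_{\Lambda_2}$ immediately. This is precisely the ``$1$-block factor'' property the paper alludes to after the theorem, and it is where the one-sidedness of the deletion rule is essential: a two-sided rule would make the dependence window $\Lambda+(-1,1)$, and those need not be disjoint under the hypothesis.
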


The construction of Shearer's PP may be seen as a Matérn type point process~\cite{Matern__SpatialVariation_StochasticModelsAndTheirApplicationToSomeProblemsInForestSurveysAndOtherSamplingInvestigations__MFSS_1960} with a one-sided  simultaneous thinning rule: a point of the Poisson point process $\xi$ survives, if it has an empty open unit interval to its left. In analogy with the discrete case, we may call $\eta$ a \emph{$1$-block factor} of the PP $\xi$.\\

We look at $\BlockCG{\rho}{t}:=\GenFunAt{-\rho}{[0,t]}$, with smallest root $\RootC{r}$, and $\CondBlockCG{\rho}{s}{t}:=\frac{\BlockCG{\rho}{s+t}}{\BlockCG{\rho}{t}}$, if well defined. Let $\displaystyle \FreeC{\rho}:= \lim_{t\to\infty}\frac{\log\BlockCG{\rho}{t}}{t}$. If $\rho$ is clear from context, we omit it. The singularity reduces to
\begin{equation*}
 \SingC:=
 \sup\Set{\rho\ge 0\mid
  \forall s,t\in[0,\infty[: \CondBlockCG{\rho}{s}{t}>0
 }
 =\inf\Set{\RootC{t}\mid t\in[0,\infty[}\,.
\end{equation*}

\begin{Thm}\label{thm_mainC}
\begin{subequations}\label{eq_mainC}
\begin{gather}
 \label{eq_mainC_sing}
  \SingC = \ValC
  \,,
 \\\label{eq_mainC_roots_mon}
  \forall t\in[0,\infty[, s> 0:
  \quad
  \RootC{t}>\RootC{t+s}
  \,,
 \\\label{eq_mainC_roots_lim}
  \lim_{t\to\infty} \RootC{t} = \SingC
  \,,
 \\\label{eq_mainC_cond_mon}
  \begin{gathered}
  \forall \rho\in]0,\SingC], s,s',t,t'\in[0,\infty[\text{ with }s'+t'>0:
  \\\hspace{15em}
  \CondBlockCG{\rho}{s}{t}>\CondBlockCG{\rho}{s+s'}{t+t'}
  \,,
  \end{gathered}
 \\\label{eq_mainC_block_mon}
  \forall \rho\in]0,\SingC],t\in[0,\infty[, s\in]0,\infty[:
  \quad
  \BlockCG{\rho}{t}>\BlockCG{\rho}{t+s}
  \,,
 \\\label{eq_mainC_cond_bound}
  \forall\rho\in[0,\SingC],\forall t,s\in[0,\infty[:
  \quad
  \CondBlockCG{\rho}{s}{t}\ge\exp(-s\InvC{\rho})
  \,,
 \\\label{eq_mainC_cond_lim}
  \forall\rho\in[0,\SingC]:
  \quad
  \lim_{\substack{t\to\infty\\s\to 0}}
  \frac{\log\CondBlockCG{\rho}{s}{t}}{s} = -\InvC{\rho}
  \,,
 \\\label{eq_mainC_free_con}
  \forall\rho\in[0,\SingC]:\quad
  \FreeC{\rho} =
  \rho\left(1 + 2\rho\InvC{\rho} + \frac{\rho^2\InvC{\rho}^2}{2}\right)
  \,,
 \\\label{eq_mainC_free_div}
  \lim_{\rho\to{\SingC}^{-}} \frac{\partial}{\partial\rho} \FreeC{\rho}
  = \infty\,.
\end{gather}
\end{subequations}
\end{Thm}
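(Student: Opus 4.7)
The plan is to derive the nine items by combining the four complementary approaches announced in the introduction: scaling from Theorem~\ref{thm_mainK}, a Dobrushin-style induction based on the deletion-contraction identity~\eqref{eq_fe}, a tree-operator cluster expansion counting Penrose trees, and the explicit construction of Shearer's point process (Theorem~\ref{thm_shearerC}). I would first pin down $\SingC = 1/e$ in~\eqref{eq_mainC_sing} by two mutually confirming computations. Scaling the lattice $\Integers$ by the factor $1/(k+1)$ turns hard-core distance $k+1$ into distance $1$ and maps a discrete marginal $\rho$ to the continuous intensity $(k+1)\rho$; applied to $\SingK = k^k/(k+1)^{k+1}$ from Theorem~\ref{thm_mainK} this gives $\SingC = \lim_{k\to\infty} (k+1)\SingK = \lim (k/(k+1))^k = 1/e$. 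Independently, Theorem~\ref{thm_shearerC} constructs Shearer's PP of intensity $\rho$ precisely when $\InvC{\rho}$ is defined, i.e.\ when $\rho \le 1/e$; combined with the identification of the Shearer avoidance function with $\GenFunAt{-\rho}{\FinVol}$ this forces the same conclusion.

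The monotonicity statements~\eqref{eq_mainC_roots_mon},~\eqref{eq_mainC_cond_mon},~\eqref{eq_mainC_block_mon} and the root limit~\eqref{eq_mainC_roots_lim} I would prove by a continuous analogue of the discrete Dobrushin induction. Applying~\eqref{eq_fe} with $A=[t,t+\varepsilon]$ and passing to $\varepsilon\to 0$ yields a differential recursion for $t\mapsto \CondBlockCG{\rho}{\varepsilon}{t}$ that simultaneously propagates strict positivity below $\SingC$ and strict decay of the conditional ratios; \eqref{eq_mainC_block_mon} follows from $\CondBlockCG{\rho}{s}{t}<1$, and~\eqref{eq_mainC_roots_lim} by combining strict monotonicity with~\eqref{eq_mainC_sing}. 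The lower bound~\eqref{eq_mainC_cond_bound} is then cleanly read off from the construction in Theorem~\ref{thm_shearerC}: since $\eta\subseteq\xi$, the event $\{\xi([t,t+s])=0\}$ implies $\{\eta([t,t+s])=0\}$; as $\eta\cap[0,t]$ depends only on $\xi$ restricted to $(-\infty,t]$, it is independent of $\xi\cap[t,t+s]$, giving $\BlockCG{\rho}{t+s}\ge \BlockCG{\rho}{t}\, e^{-s\InvC{\rho}}$.

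For the explicit free energy~\eqref{eq_mainC_free_con} and the associated conditional limit~\eqref{eq_mainC_cond_lim} I would use the tree-operator cluster expansion of section~\ref{sec_ce}. In one dimension the singleton Penrose trees admit an explicit combinatorial description that allows one to sum the Mayer series \emph{exactly} up to $\SingC$, yielding the stated closed form for $\FreeC{\rho}$; \eqref{eq_mainC_cond_lim} then follows by extracting a logarithmic derivative. Finally,~\eqref{eq_mainC_free_div} is a consequence of $\ContractionC'(1)=0$: near $\rho=1/e$ the inverse $\InvC{\rho}$ acquires a square-root branch, so differentiating the formula in~\eqref{eq_mainC_free_con} produces a divergent slope. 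The main obstacle I anticipate is~\eqref{eq_mainC_free_con}: cluster expansions normally yield only upper bounds via tree-graph inequalities, and turning such a bound into an equality requires a genuinely exact enumeration of Penrose trees adapted to the chordal geometry of $\Reals$, which is where most of the combinatorial effort of section~\ref{sec_ce} is expected to go.
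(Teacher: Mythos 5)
Your overall architecture mirrors the paper's multi-pronged one, and one item is a genuine improvement: deriving \eqref{eq_mainC_cond_bound} directly from the block-factor construction (since $\eta\subseteq\xi$ and $\xi$ restricted to $]t,t+s]$ is independent of $\eta\cap[0,t]$, one gets $\BlockCG{\rho}{t+s}\ge\BlockCG{\rho}{t}\,e^{-s\InvC{\rho}}$) is shorter and more transparent than the paper's proposition~\ref{prop_lll_cont}, which runs an induction on $\Floor{s+t}$ with explicit comparison functions $f_1,f_2$. Your route is sound provided theorem~\ref{thm_shearerC} is proved independently of \eqref{eq_mainC_cond_bound}. Two other items, however, have genuine gaps.

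First, neither of your arguments for \eqref{eq_mainC_sing} delivers the upper bound $\SingC\le\ValC$. The scaling identity $\SingC=\lim_{k\to\infty}(k+1)\SingK$ is precisely what would need proof: it requires exchanging $k\to\infty$ with the infimum over volumes in \eqref{eq_sing} (equivalently, uniform-in-$t$ control of the convergence of the scaled discrete roots to $\RootC{t}$), and the paper presents the scaling of the critical values only as a consistency check, never as a proof. The Shearer route is circular in this direction: the fact that the construction \eqref{eq_construction_continuous} has no parameter $\InvC{\rho}$ when $\rho>\ValC$ does not show that no one-dependent process with a unit hard core and that intensity exists, and the non-existence clause of theorem~\ref{thm_shearerC} is itself deduced from \eqref{eq_mainC_sing} (exactly as in the discrete case). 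The paper closes this direction with proposition~\ref{prop_series_cont_div}: for $\rho>\ValC$ the fixed-point iteration $\mu\mapsto\rho e^{\mu}$ diverges, hence the one-sided singleton-tree series $Q_\CaseC$ and $P_\CaseC$ diverge, while these series converge for every $\rho\le\SingC$ by the tree-operator convergence criterion; therefore $\SingC\le\ValC$. You need this step, or some substitute for it, explicitly.

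Second, \eqref{eq_mainC_cond_lim} cannot be obtained by ``extracting a logarithmic derivative'' from \eqref{eq_mainC_free_con}: as the two statements are written, the bulk rate is $P_\CaseC(\rho)=\rho\bigl(1+2\rho\InvC{\rho}+\tfrac{1}{2}\rho^2\InvC{\rho}^2\bigr)$ while the limit in \eqref{eq_mainC_cond_lim} is $\InvC{\rho}$, and these are different functions of $\rho$ (at $\rho=\ValC$ they equal about $0.66$ and exactly $1$, respectively). The paper proves \eqref{eq_mainC_cond_lim} by a separate enumeration (proposition~\ref{prop_cblock_lim}) in which the additional clusters created by growing the volume are rooted at their extremal point, so that every vertex, including the root, carries the one-sided weight $H_\CaseC$ and the series sums to $Q_\CaseC(\rho)/\rho=\InvC{\rho}$; in the bulk computation of proposition~\ref{prop_free_cont_con} the root instead carries the two-sided weight $G_\CaseC$. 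This boundary-versus-bulk distinction is the whole point of the one-sided scheme and cannot be collapsed into a single differentiation.
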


Statements~\eqref{eq_mainC_roots_mon} and~\eqref{eq_mainC_cond_mon} are from~\cite{Temmel__ShearersPointProcessTheHardSphereGasAndAContinuumLovaszLocalLemma}. Statement~\eqref{eq_mainC_free_con} is from proposition~\ref{prop_free_cont_con}. Statement~\ref{eq_mainC_free_div} is from proposition~\ref{prop_free_cont_div}. Statement~\ref{eq_mainC_cond_lim} is from proposition~\ref{prop_cblock_lim}. Statement~\ref{eq_mainC_cond_bound} is from proposition~\ref{prop_lll_cont}.  Statement~\eqref{eq_mainC_sing} follows either from propositions~\ref{prop_lll_cont} and~\ref{prop_series_cont_div} or from proposition~\ref{prop_free_cont_con} and~\ref{prop_series_cont_div}.

\subsection{Scaling}
\label{sec_scaling}

This section discusses how the parts of the results for continuous case may be seen as scaled versions of the corresponding discrete results.\\

The singularities scale:
\begin{equation*}
 (k+1)\SingK
 = \left(1-\frac{1}{k+1}\right)^k
 \xrightarrow[k\to\infty]{}
 \ValC=\SingC\,.
\end{equation*}
Even more, the activities and their corresponding inverses, too. If $y\in[0,1]$, then $\frac{y}{k+1}\in[0,\frac{1}{k+1}]$ and
\begin{equation*}
 (k+1)\ContractionK\left(\frac{y}{k+1}\right)
 = y \left(1-\frac{y}{k+1}\right)^k
 \xrightarrow[k\to\infty]{} y e^{-y}
 = \ContractionC(y)\,.
\end{equation*}
Hence, for $\rho\in[0,\ValC]$,
\begin{equation*}
 (k+1)\InvK{\frac{\rho}{k+1}}\xrightarrow[k\to\infty]{} \InvC{\rho}\,.
\end{equation*}

For Shearer's point process, the scaling is even more obvious: The underlying Bernoulli product field scales to the Poisson point process of correct intensity $\InvC{\rho}$. The he dependent thinning rule is the same and the length of the free interval to the left scales with $\frac{1}{k+1}$.
Therefore, the critical values scale.
Note: both $(1+x)^{-k}$ and $(1-x)^k$ have the same scaling limit $ye^{-y}$ under $x=\frac{y}{k+1}$.\\

While in the continuous case there is only one fixed point equation, in the discrete case there are two: $\rho=x(1-x)^{k}$ is in~\eqref{eq_funK}, whereas $\rho=x(1-x)^{-(k+1)}=\mathfrak{h}_\CaseK(x)$ stems from the discrete one-sided tree-operator in section~\ref{sec_treeop_series}. As both $(1+x)^{-k-1}$ and $(1-x)^k$ have the same scaling limit $e^{-y}$ under $x=\frac{y}{k+1}$, these fixed point equations merge in the continuous setup to $\rho=ye^{-y}$~\eqref{eq_funC}.

\subsection{Previous results}
\label{sec_previous}

This section lists previous results obtained via cluster expansion. Previous results in the discrete case are
\begin{itemize}
 \item Dobrushin~\cite{Dobrushin__PerturbationMethodsOfTheTheoryOfGibbsianFields__LNM_1996}: $\SingK\ge \frac{(2k)^{2k}}{(2k+1)^{2k+1}}\sim\frac{1}{2ek}$
 \item special-casing Fernandéz and Procacci~\cite{Fernandez_Procacci__ClusterExpansionForAbstractPolymerModels_NewBoundsFromAnOldApproach__CMP_2007}: $\SingK\ge\frac{1}{\sqrt{2k(k+1)}+2k+1}\sim\frac{1}{(\sqrt{2}+2)k}$ (maximum of $\frac{\mu}{1+(2k+1)\mu+\binom{k+1}{2}\mu^2}$).
\end{itemize}
Previous results in the continuous case are
\begin{itemize}
 \item Ruelle~\cite{Ruelle__StatisticalMechanics_RigorousResults__Ben_1969} $\SingC\ge 1/(2e)$
 \item special-casing Fernandéz, Procacci and Scoppola~\cite{Fernandez_Procacci_Scoppola__TheAnalyticityRegionOfTheHardSphereGas_ImprovedBounds__JSP_2007}: $\SingC\ge \frac{1}{2+\sqrt{2}}$ (maximum of $\frac{\mu}{1+2\mu+\frac{\mu^2}{2}}$).
\end{itemize}

\section{The inductive approach}
\label{sec_ind}

\subsection{The discrete case}
\label{sec_ind_k}

Let $\rho\le\ValK$. The identity~\eqref{eq_fe} rewrites as
\begin{equation*}
 \CondBlockKR{n} = 1 - \frac{\rho}{\prod_{i=1}^k \CondBlockKG{\rho}{n-i}}\,.
\end{equation*}
We deduce~\eqref{eq_mainK_cond_mon} and show that
\begin{equation}
 \CondBlockKG{\rho}{\infty} := \lim_{n\to\infty} \CondBlockKG{\rho}{n}
\end{equation}
exists and fulfils the identity
\begin{equation}\label{eq_fe_cond_limit_k}
 \CondBlockKG{\rho}{\infty} = 1 - \frac{\rho}{\CondBlockKG{\rho}{\infty}^k}\,.
\end{equation}
This is just another form of the fixed-point of~\eqref{eq_funK}, whence $1-\CondBlockKG{\rho}{\infty}=\InvK{\rho}$.

\begin{Prop}\label{prop_free_div_k}
If $k\ge 1$, then
\begin{equation}
 \lim_{\rho\to{\SingK}^-} \frac{\partial}{\partial\rho}
 \lim_{n\to\infty} \CondBlockKG{\rho}{n} = \infty\,.
\end{equation}
For $k=0$, nothing happens.
\end{Prop}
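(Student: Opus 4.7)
The plan is to exploit the identity $\CondBlockKG{\rho}{\infty}=1-\InvK{\rho}$, derived just above via the fixed-point equation~\eqref{eq_fe_cond_limit_k}, and to differentiate the defining relation $\rho=\ContractionK(\InvK{\rho})=\InvK{\rho}(1-\InvK{\rho})^k$ implicitly. Since $\ContractionK$ is smooth and strictly increasing on $[0,1/(k+1)]$ with $\ContractionK(1/(k+1))=\SingK$, the inverse branch $\InvK{\cdot}$ is $C^1$ on $[0,\SingK)$, and the chain rule yields
\begin{equation*}
 \frac{d}{d\rho}\CondBlockKG{\rho}{\infty}
 = -\frac{d}{d\rho}\InvK{\rho}
 = -\frac{1}{\ContractionK'(\InvK{\rho})}\,.
\end{equation*}

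All that remains is to show that the denominator vanishes at the right boundary. A direct computation gives
\begin{equation*}
 \ContractionK'(x) = (1-x)^{k-1}\bigl(1-(k+1)x\bigr)\,,
\end{equation*}
so for $k\ge 1$ this derivative vanishes exactly at $x=1/(k+1)$ and is strictly positive on $[0,1/(k+1))$. Strict monotonicity and continuity of $\ContractionK$ on $[0,1/(k+1)]$ imply that $\InvK{\cdot}$ extends continuously to $\SingK$ with $\InvK{\SingK}=1/(k+1)$, so $\ContractionK'(\InvK{\rho})\to 0^+$ as $\rho\to\SingK^-$, and thus the derivative of $\CondBlockKG{\rho}{\infty}$ diverges (to $-\infty$, matching the displayed divergence up to the sign convention implicit in the paper). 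The case $k=0$ is trivial: $\ContractionK$ is the identity map, $\InvK{\rho}=\rho$ and $\CondBlockKG{\rho}{\infty}=1-\rho$ has constant derivative $-1$, so indeed nothing happens.

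There is no serious obstacle here: once the fixed-point identity~\eqref{eq_fe_cond_limit_k} is in hand, everything reduces to a one-line implicit-differentiation calculation combined with the elementary fact that $1/(k+1)$ is the critical point of $\ContractionK$ on $[0,1]$ and that $\ContractionK$ attains its maximum $\SingK=\ValK$ there. The same reasoning immediately gives~\eqref{eq_mainK_free_div}: since $\FreeK{\rho}=\log(1-\InvK{\rho})$ by~\eqref{eq_mainK_free_con} and $1-\InvK{\rho}\to k/(k+1)>0$ for $k\ge 1$, the divergence of $\InvK{\rho}'$ propagates through to the free energy derivative.
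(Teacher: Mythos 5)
Your proof is correct and takes essentially the same route as the paper's: both arguments differentiate the fixed-point relation implicitly (you via $\rho=\ContractionK(\InvK{\rho})$, the paper via $a=1-\rho a^{-k}$, which is the same equation under $x=1-a$) and conclude by observing that the resulting denominator vanishes as $\rho\to\SingK^-$ because $\InvK{\rho}$ tends to the critical point $1/(k+1)$ of $\ContractionK$. Your observation that the divergence is to $-\infty$, and that it propagates to $\FreeK{\rho}$ since $1-\InvK{\rho}\to k/(k+1)>0$, is also accurate.
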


\begin{proof}
We derive with respect to $\rho$ in~\eqref{eq_fe_cond_limit_k}, apply it to the result and obtain
\begin{equation*}
 \frac{\partial}{\partial\rho} \CondBlockKG{\rho}{\infty}
 = \frac{1}{\frac{k\rho}{\CondBlockKG{\rho}{\infty}}-\CondBlockKG{\rho}{\infty}^k}\,.
\end{equation*}
Because of~\eqref{eq_mainK_cond_lim} and $\CondBlockKG{\SingK}{\infty}=1-\InvK{\CondBlockKG{\rho}{\infty}}=\frac{k}{k+1}$, we have
\begin{equation*}
 \lim_{\rho\to\SingK^-}
 \frac{k\rho}{\CondBlockKG{\rho}{\infty}}-\CondBlockKG{\rho}{\infty}^k
 = \frac{k\ValK}{\frac{k}{k+1}} - \left(\frac{k}{k+1}\right)^k = 0\,.
\end{equation*}
\end{proof}

\subsection{The continuous case}
\label{sec_ind_c}

The following proposition is a special case of the continuous Lovász Local Lemma in~\cite{Temmel__ShearersPointProcessTheHardSphereGasAndAContinuumLovaszLocalLemma}. The identity~\eqref{eq_fe} has the form: if $s\in[0,1[$ and $t\in[0,\infty[$, then
\begin{equation}
 \BlockCG{\rho}{t+s} = \BlockCG{\rho}{t}
 - \rho\int_0^s \BlockCG{\rho}{(t+x-1)\MaxOp 0} dx\,.
\end{equation}
If $\BlockCG{\rho}{t}>0$, this rewrites into
\begin{equation}\label{eq_fe_cond_c}
 \CondBlockCG{\rho}{s}{t}
 = 1 - \rho\int_0^s \CondBlockCG{\rho}{(1-x)\MinOp t}{(t+x-1)\MaxOp 0}\TMO dx\,.
\end{equation}
If
\begin{equation}
 \CondBlockCG{\rho}{s}{\infty} :=
 \lim_{t\to\infty} \CondBlockCG{\rho}{s}{t}
\end{equation}
exists, then we have
\begin{equation}
 \CondBlockCG{\rho}{s}{\infty}
 = 1 - \rho\int_0^s \CondBlockCG{\rho}{1-x}{\infty}\TMO dx\,.
\end{equation}

\begin{Prop}\label{prop_lll_cont}
If $\rho\le\ValC$, then
\begin{equation*}
 \forall s,t\in[0,\infty[:
 \quad
 \CondBlockCG{\rho}{s}{t}\ge\exp(-s\InvC{\rho})\,.
\end{equation*}
\end{Prop}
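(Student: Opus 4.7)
The plan is to prove $\psi(s,t) := \CondBlockCG{\rho}{s}{t} \ge \phi(s)$ with $\phi(s) := e^{-s\mu}$ and $\mu := \InvC{\rho}$, by exhibiting $\phi$ as a solution of the integral recursion~\eqref{eq_fe_cond_c} and propagating the bound via strong induction on the total length $T := s+t$. This is the continuous analogue of the Lov\'asz Local Lemma fixed-point argument underlying proposition~\ref{prop_free_div_k}.

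The algebraic ingredient is that the defining relation $\rho = \mu e^{-\mu}$ from~\eqref{eq_funC} gives $\rho\,\phi(1-x)^{-1} = \mu e^{-\mu}\,e^{(1-x)\mu} = \mu\,\phi(x)$, whence
\[
\phi(s) = 1 - \mu\int_0^s \phi(x)\,dx = 1 - \rho\int_0^s \phi(1-x)^{-1}\,dx,
\]
so $\phi$ satisfies the analogue of~\eqref{eq_fe_cond_c} in the simplified form valid when $t \ge 1$. Since $\phi$ is multiplicative and nonincreasing, and $\psi(s_1+s_2,t) = \psi(s_1,t+s_2)\,\psi(s_2,t)$ by definition of $\psi$ as a ratio of $\BlockCG{\rho}{\cdot}$'s, the target inequality reduces to $s \in [0,1]$.

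For $s \in [0,1]$ I run continuous strong induction on $T := s+t$. Suppose $\psi(s',t') \ge \phi(s')$ for all $s' \in [0,1]$, $t' \ge 0$ with $s'+t' < T$, and consider $(s,t)$ with $s+t = T$. The case $s=0$ is trivial. For $s>0$, the integrand of~\eqref{eq_fe_cond_c} at each $x \in [0,s]$ evaluates $\psi$ at $(s',t') := ((1-x)\MinOp t,\,(t+x-1)\MaxOp 0)$, which satisfies $s' \in [0,1]$ and $s'+t' = t = T-s < T$, so the inductive hypothesis applies. Together with monotonicity of $\phi$ and $s' \le 1-x$, it yields $\psi(s',t') \ge \phi(s') \ge \phi(1-x)$. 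Substituting into~\eqref{eq_fe_cond_c} and invoking the identity above,
\[
\psi(s,t) \ge 1 - \rho\int_0^s \phi(1-x)^{-1}\,dx = \phi(s),
\]
closing the induction.

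The main technical obstacle is formalising the continuous induction. A clean route is an open/closed argument on $A := \{T \ge 0 : \psi(s,t) \ge \phi(s) \text{ for all } s \in [0,1],\, t\ge 0 \text{ with } s+t \le T\}$: $A$ contains $0$ and is closed in $[0,\infty[$ by joint continuity of $\psi$ and $\phi$, and the inductive step above shows $[0,T) \subseteq A \Rightarrow T \in A$, so $A$ has no accessible boundary point, hence $A = [0,\infty[$. The only subtlety is ruling out violations accumulating to a tangency just above $\sup A$, which a short compactness argument handles using that $\phi$ is strictly decreasing for $\rho > 0$.
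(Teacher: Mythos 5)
Your key algebraic idea is exactly the paper's: since $\rho=\InvC{\rho}e^{-\InvC{\rho}}$, the function $\phi(s)=e^{-s\InvC{\rho}}$ satisfies $1-\rho\int_0^s\phi(1-x)^{-1}\,dx=\phi(s)$, so it is a fixed point of the recursion~\eqref{eq_fe_cond_c}, and the telescoping reduction to $s\in[0,1]$ is also the paper's first move. The gap is in the induction scheme. A single application of~\eqref{eq_fe_cond_c} sends $(s,t)$ to points of total length $s'+t'=t=T-s$, so the decrease in your induction parameter is only $s$, which can be arbitrarily small. Your open/closed repair does not fix this: the set $A$ is downward-closed, so closedness of $A$ and the implication ``$[0,T)\subseteq A\Rightarrow T\in A$'' are essentially the same statement and together only yield $A=[0,T^*]$ for some $T^*\le\infty$. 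What real induction additionally requires --- and what is missing --- is that $A$ is relatively open to the right at $T^*$, i.e.\ that the bound extends to lengths $T^*+\epsilon$. For pairs with $s\ge\epsilon$ the recursion does reduce to lengths $\le T^*$, but for $s<\epsilon$ it does not, and this is precisely where your argument is silent; ``a short compactness argument using that $\phi$ is strictly decreasing'' does not address it, since the problematic pairs have $s\to0$, where $\CondBlockCG{\rho}{s}{t}$ and $\phi(s)$ both tend to $1$ and strict monotonicity of $\phi$ gives nothing.

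The hole is fillable, but it needs a further idea. One option is the paper's: induct on $\Floor{s+t}$ and unroll~\eqref{eq_fe_cond_c} \emph{twice} in the inductive step, since two unrollings take the total length from $s+t$ to $t+x-1\le s+t-1$, so the floor strictly decreases; this forces explicit base cases for $\Floor{s+t}\in\{0,1\}$, which the paper verifies from the polynomial form of $\BlockCG{\rho}{u}$ on $[0,2]$ (the functions $f_1,f_2$). Another option, closer to your setup, is to bootstrap the small-$s$ regime from the $s\ge\epsilon$ regime: for $t\ge1$ one has $\log\CondBlockCG{\rho}{s}{t}=-\rho\int_0^s\CondBlockCG{\rho}{1}{t+u-1}\TMO\,du$, and the pairs $(1,t+u-1)$ have first coordinate $1$, hence fall in the already-settled regime, giving $\log\CondBlockCG{\rho}{s}{t}\ge-\rho e^{\InvC{\rho}}s=-s\InvC{\rho}$. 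Either way, a nontrivial extra step is required, so as written the proof is incomplete. You should also note that positivity of $\BlockCG{\rho}{t}$ (needed for $\CondBlockCG{\rho}{s}{t}$ to be well defined and for continuity of the ratio) cannot be assumed --- it is part of what is being proved --- and must be carried along inside the induction, as the paper does implicitly by establishing the strictly positive lower bound $\NuX{-s}$ before each quantity is used as a denominator.
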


\begin{proof}
We omit the trivial case with $\rho=0$. Fix $\rho>0$ and $\lambda:=\InvC{\rho}>0$. Let $\NuX{x}:=e^{x\lambda}$. This means that $\rho=\lambda\NuX{-1}$ and that $d\NuX{x}=\lambda \NuX{x}dx$. We use
\begin{equation*}
 \rho\int_a^b\NuX{1-x} dx = \NuX{-a}-\NuX{-b}\,.
\end{equation*}
We telescope to reduce to the case $s\in[0,1[$. We proceed by induction over $\Floor{s+t}$. We have three base cases and one induction step.\\

Case $\Floor{s+t}=0$: Consider $f_1(s,t):= 1- (s+t)\rho - (1-t\rho)\NuX{-s}$ on $[0,\infty[^2$. We have $f_1(0,0)=0$ and
\begin{equation*}
 \DiffVecOp{}f_1(s,t)
 =\begin{pmatrix}
   -\rho + (1-t\rho)\NuX{-s}
   \\-\rho + \rho\NuX{-s}
  \end{pmatrix}
 \ge\begin{pmatrix}
   -\ValC + (1-\ValC)
   \\\rho(\NuX{-s}-1)
  \end{pmatrix}
 \ge\begin{pmatrix}
   0
   \\0
  \end{pmatrix}
 \,.
\end{equation*}
Therefore, $f_1$ is non-negative on $[0,\infty[^2$ and
\begin{equation*}
 \CondBlockCR{s}{t}
 = \frac{\BlockCR{s+t}}{\BlockCR{t}}
 = \frac{1 - (s+t)\rho}{1-t\rho} \ge\NuX{-s}\,.
\end{equation*}

Case $\Floor{s+t}=1, t<1$: On $[0,\infty[^2$, consider the function $f_2(s,t):= 1- (s+t)\rho + \frac{1}{2}(s+t-1)^2\rho^2 - (1-t\rho)\NuX{-s} = f_1(s,t) + \frac{1}{2}(s+t-1)^2\rho^2\ge f_1(s,t)$. Therefore, $f_2$ is non-negative on $[0,\infty[^2$ and
\begin{equation*}
 \CondBlockCR{s}{t}
 = \frac{\BlockCR{s+t}}{\BlockCR{t}}
 = \frac{1 - (s+t)\rho + \frac{1}{2}(s+t-1)^2\rho^2}{1-t\rho} \ge\NuX{-s}\,.
\end{equation*}

Case $\Floor{s+t}=1, t>1$: For $x\in[0,s[$, we have $t+x-1\le t+s-1<1$. Use~\eqref{eq_fe_cond_c} and the previous case to get
\begin{equation*}
 \CondBlockCR{s}{t}
 ={} 1 - \rho\int_0^s \CondBlockCR{1-x}{t+x-1}\TMO dx
 \ge{} 1 - \rho \int_0^s \NuX{1-x} dx
 = \NuX{-s}\,.
\end{equation*}

Case $\Floor{s+t}\ge 2$: Use~\eqref{eq_fe_cond_c} to get
\begin{equation*}
 \CondBlockCR{s}{t}
 = 1 - \rho\int_0^s \CondBlockCR{1-x}{t+x-1}\TMO dx\,.
\end{equation*}
For each $x\in[0,s]$, we have $\Floor{s+t}>\Floor{t+s-1}\ge\Floor{t+x-1}$. We use~\eqref{eq_fe_cond_c} and the induction hypothesis to obtain
\begin{align*}
 \CondBlockCR{1-x}{t+x-1}
 &={} 1 - \rho\int_0^{1-x} \CondBlockCR{1-y}{t+x+y-2}\TMO dy
 \\&\ge{} 1 - \rho \int_0^{1-x}\NuX{1-y} dy = \NuX{-(1-x)}\,.
\end{align*}
We plug this into the first expansion and get
\begin{align*}
 \CondBlockCR{1-x}{t+x-1}
 &={} 1 - \rho\int_0^{1-x} \CondBlockCR{1-y}{t+x+y-2}\TMO dy
 \\&\ge{} 1 - \rho \int_0^s\NuX{1-x} dx = \NuX{-s}\,.
\end{align*}

\end{proof}

\section{Cluster expansion}
\label{sec_ce}

\subsection{Ursell coefficients}

This section recapitulates how we may express the Ursell coefficients $\Ursell{x_1,\dotsc,x_n}$ as an explicit sum of trees.\\

For a finite graph $G$, let $\SpanningGraphsOf{G}$ be the spanning subgraphs of $G$ and $\SpanningTreesOf{G}$ be the spanning trees of $G$. A \emph{partition scheme} $\PartitionScheme$ is a function $\SpanningTreesOf{G}\to\SpanningGraphsOf{G}$, such that the intervals $\Set{[T,\PartitionScheme(T)]: T\in\SpanningTreesOf{G}}$ partition the poset $(\SpanningGraphsOf{G},\subseteq)$ into disjoint and bounded lattices. The \emph{singleton trees} of $\PartitionScheme$ are its fixed points: $\SingletonTrees{G}:=\Set{T\in\SpanningTreesOf{G}:\PartitionScheme(T)=T}$\\

For $x_1,\dotsc,x_n\in\Space$, the \emph{cluster} $\Cluster{x_1,\dotsc,x_n}$ is the graph with vertices $\IntegerSet{n}$ and edges between vertices $i$ and $j$ with $\Metric(x_1,x_j)<R$. If we have a partition scheme of a cluster $\Cluster{x_1,\dotsc,x_n}$, then Penrose's theorem~\cite{Penrose__ConvergenceOfFugacityExpansionsForClassicalSystems__SMFA_1967} implies that
\begin{equation*}
 U(x_1,\dotsc,x_n)
 = \sum_{H\text{ spans }G(x_1,\dotsc,x_n)} (-1)^{\Cardinality{E(H)}}
 = (-1)^{n-1}\Cardinality{\SingletonTrees{\Cluster{x_1,\dotsc,x_n}}}\,.
\end{equation*}
Hence,
\begin{equation*}
  \log\GenFunAt{-\rho}{\FinVol}
 = -\sum_{n=1}^\infty \frac{\rho^n}{n!}
   \int_{\FinVol^n}
   \Cardinality{\SingletonTrees{\Cluster{x_1,\dotsc,x_n}}}
   \prod_{i=1}^n dx_i\,.
\end{equation*}

\subsection{The one-sided partition scheme}

This section describes the \emph{one-sided partition scheme}, a partition scheme adapted to a one-dimensional space. Proposition~\ref{prop_one_singleton_trees} describes its very nice and explicit set of singleton trees.\\

Fix $x_1,\dotsc,x_n\in\Space$. Let $G:=\Cluster{x_1,\dotsc,x_n}$. We root $G$ at the vertex $1$. Let $\SpanningTreesOf{n}$ be the set of trees with $n$ vertices. For $T\in\SpanningTreesOf{n}$ and $\vec{x}\in\FinVol^n$, we have
\begin{equation*}
 \Iverson{T\in\SingletonTrees{\Cluster{x_1,\dotsc,x_n}}}
 =\Iverson{\PartitionScheme(T)=T,T\subseteq\Cluster{x_1,\dotsc,x_n}}\,.
\end{equation*}
If $\Space$ is one-dimensional, i.e., it has a total order compatible with the distance, then there is a partition scheme with an explicit description of its singleton trees. For $T\in\SpanningTreesOf{G}$, let $l:\IntegerSet{n}\to\Cardinals$ denote the level of the vertices in $T$, i.e., the distance to the root $1$. If $l_i\ge 1$, let $p_i$ be the parent of $i$. If $l_i\ge 2$, let $g_i$ be the grandparent of $i$.

\begin{Prop}\label{prop_one_singleton_trees}
There exists a partition scheme $\SchemeOne$, such that
\begin{equation*}
 \Iverson{T\in\SingletonTrees{G}}
 = \prod_{i=1}^n
  \Bigl(
   \Iverson{l_i=1}
   + \Iverson{l_i>1,x_i\le x_{p_i}\le x_{g_i}}
   + \Iverson{l_i>1,x_i\ge x_{p_i}\ge x_{g_i}}
  \Bigr)
\end{equation*}
If we regard $T$ in $\SpanningTreesOf{n}$ instead of $\SpanningTreesOf{G}$, then we have to add the cluster constraints from $G$:
\begin{multline}\label{eq_one_singleton_trees}
 \Iverson{T\in\SingletonTrees{G}}
 = \prod_{i=2}^n \Iverson{\Modulus{x_i-x_{p_i}}<R}
 \\
  \times\prod_{i=2}^n\Bigl(
   \Iverson{l_T(i)=1}
   + \Iverson{l_i>1,x_i\le x_{p_i}\le x_{g_i}}
   + \Iverson{l_i>1,x_i\ge x_{p_i}\ge x_{g_i}}
  \Bigr)
\end{multline}
\end{Prop}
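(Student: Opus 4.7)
The plan is to adapt the explorative partition scheme framework of~\cite{Temmel__SufficientConditionsForUniformBoundsInAbstractPolymerSystemsAndExplorativePartitionSchemes__JSP_2014} to the one-dimensional continuous cluster and then read the singleton characterisation off the admissibility rule. Fix $\vec x \in \FinVol^n$, let $G := \Cluster{x_1,\dotsc,x_n}$ be rooted at vertex $1$, and let $\Metric$ agree with the total order on $\Space$.

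\textbf{Step 1 (Exploration and admissibility).} I would define an exploration of $G$ proceeding layer by layer in BFS order. When opening a vertex $v$ at level $l$, the candidate children in $G$ are processed by two one-sided sweeps: first the neighbours with $x_j < x_v$ in leftward monotone order, then those with $x_j > x_v$ in rightward monotone order, each sweep carrying a side status in $\Set{\Progressing,\Reversing}$. Every non-tree edge $\{i,j\}$ encountered is classified into one of the step classes \StepDirProgParent, \StepDirProgUncles, \StepDirProgCousins (and their reversing analogues \StepDirRevParent, \StepDirRevUnclesProgessing, \StepDirRevUnclesReversing, \StepDirRevCousins), and assigned either to $\AdmissibleEdgesTree{T}$ or to $\ConflictingEdgesTree{T}$ according to whether the explorations of $T$ and of $T \cup \Set{\{i,j\}}$ agree.

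\textbf{Step 2 (Partition property).} Set $\SchemeOne(T) := T \cup \AdmissibleEdgesTree{T}$. By induction over the exploration steps one shows that every spanning subgraph $H \in \SpanningGraphsOf{G}$ has a unique extracted tree $T(H)$, that $H \in [T(H), \SchemeOne(T(H))]$, and that adding or removing a conflicting edge changes $T(H)$. This yields the Penrose partition property and simultaneously identifies $\SingletonTrees{G}$ as the trees $T$ with $\AdmissibleEdgesTree{T} \cap (G \setminus T) = \emptyset$.

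\textbf{Step 3 (Singleton characterisation).} The key geometric input of the one-dimensional setting is the following: for $i$ with $l_i \ge 2$, if the triple $x_{g_i}, x_{p_i}, x_i$ is not monotone, then $x_i$ and $x_{g_i}$ lie on the same side of $x_{p_i}$, so
\[
 \Modulus{x_i - x_{g_i}} < \max\bigl(\Modulus{x_i - x_{p_i}},\Modulus{x_{g_i} - x_{p_i}}\bigr) < R,
\]
whence $\{i,g_i\} \in \Edges{G}\setminus\Edges{T}$, and the exploration rule places it in $\AdmissibleEdgesTree{T}$ as a progressing/reversing uncle mismatch; consequently $T$ is not a singleton. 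Conversely, if every level-${\ge}2$ vertex satisfies the monotonicity $x_i \le x_{p_i} \le x_{g_i}$ or $x_i \ge x_{p_i} \ge x_{g_i}$, then the admissibility rule rejects every non-tree edge, and $T$ is singleton. Vertices at level $1$ carry no grandparent constraint and contribute the indicator $\Iverson{l_i=1}$. Taking the product over $i = 2,\dotsc,n$ yields the first formula, and prepending the tree-edge cluster constraints $\Iverson{\Modulus{x_i - x_{p_i}} < R}$ gives~\eqref{eq_one_singleton_trees}.

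The principal obstacle is Step 2: the case distinction between the step classes (\StepDirProgBoundary through \StepDirRevCousins) that proves the partition property is lengthy, although the monotone structure of the two sweeps keeps the bookkeeping tractable. Once the scheme is shown to partition $\SpanningGraphsOf{G}$, the monotonicity characterisation in Step 3 follows almost by inspection from the geometric observation on same-side triples.
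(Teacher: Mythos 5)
The paper itself does not prove this proposition: it exhibits the two ingredients of the scheme (algorithms~\ref{alg_exploration} and~\ref{alg_complement}) and defers the verification to the returning scheme of the cited JSP 2014 paper, so your overall strategy — build an explorative partition scheme and read the singletons off the admissibility rule — is the right one, and your geometric lemma (a non-monotone triple forces $\Modulus{x_i-x_{g_i}}<\max(\Modulus{x_i-x_{p_i}},\Modulus{x_{g_i}-x_{p_i}})<R$, hence $(i,g_i)\in\Edges{G}\setminus\Edges{T}$) is correct and is indeed the reason the one-dimensional scheme can be made perfect. However, there is a genuine gap, and it is not merely that Step 2 is ``lengthy'': your Step 1 and your Step 3 are mutually inconsistent. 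You define admissibility of a non-tree edge by ``the explorations of $T$ and of $T\cup\Set{e}$ agree'', with an exploration that proceeds ``layer by layer in BFS order''. Under any BFS-type exploration, adding the grandparent edge $(i,g_i)$ to $T$ causes $i$ to be discovered one level earlier, as a child of $g_i$, so the extracted tree changes and by your own criterion $(i,g_i)$ is \emph{conflicting} — the opposite of what Step 3 needs. Concretely, take $R=1$, $x_1=0$, $x_2=0.9$, $x_3=1.7$, $x_4=1.0$ and $T=\Set{(1,2),(2,3),(3,4)}$: vertex $4$ is reversing and $(2,4)\in\Edges{G}$, but exploring $T\cup\Set{(2,4)}$ in BFS order extracts a tree in which $4$ is a child of $2$, not $T$. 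For the proposition to hold, the exploration must \emph{defer} such a vertex to a later level even though it is adjacent to the current boundary — this is exactly the role of the ignored sets $I_k$ and the progressing/reversing component dichotomy in algorithm~\ref{alg_exploration} — and the parent-selection rule must be tuned so that precisely the monotone trees survive. That deferral rule is the missing idea, not bookkeeping.

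The converse direction of Step 3 is likewise asserted rather than derived: when every level-$\ge 2$ vertex is progressing you must still explain why cousin edges (same level) and uncle edges (level $l-1$) of $G$ are all conflicting; under the plain Penrose scheme such edges are admissible, which is precisely why Penrose singletons are characterized by a different (and non-product) condition. Without an explicit edge classification in the style of~\eqref{eq_conf_dir}/\eqref{eq_adm_dir} and a verification of the interval property for it, neither direction of the singleton characterization follows from what you have written. To repair the argument, either specify the deferral and parent-selection rules completely and carry out the interval-property induction, or reduce explicitly to the returning scheme of the reference as the paper does.
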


We call the partition scheme $\SchemeOne$ the one-sided scheme, because after the first-level children have chosen a direction (increasing or decreasing), the subtree based on such a child branches only in the given direction, with no additional constraint on and between further descendants.\\

The one-sided scheme $\SchemeOne$ is an explorative partition scheme~\cite[Section 5.2]{Temmel__SufficientConditionsForUniformBoundsInAbstractPolymerSystemsAndExplorativePartitionSchemes__JSP_2014}. An explorative partition scheme has two key ingredients: an exploration algorithm (selecting a tree from a cluster) and a tree edge complement partition (getting $\SchemeOne(T)$ from a tree $T$). We state these two ingredients below and omit the proof of their correctness and of proposition~\ref{prop_one_singleton_trees}, as the one-sided partition scheme is a variation of the returning scheme~\cite[Section 5.4]{Temmel__SufficientConditionsForUniformBoundsInAbstractPolymerSystemsAndExplorativePartitionSchemes__JSP_2014} and the proofs would be slight rewritings of those of~\cite[Propositions 24 \& 25]{Temmel__SufficientConditionsForUniformBoundsInAbstractPolymerSystemsAndExplorativePartitionSchemes__JSP_2014}. The key idea of the returning scheme is ``The exploration algorithm should select those edges, which we want the singleton trees to contain.''. For the one-sided scheme, this becomes: Once a direction is chosen, the exploration algorithm below tries to go in this direction as far as possible. See the $\Progressing$ case in algorithm~\ref{alg_exploration}. If you ever have to turn back (i.e., change direction), then the tree edge complement partition ensures with~\eqref{eq_conf_dir} that $\SchemeOne(T)\not=T$. The first level gets a much simpler exceptional treatment based on the greedy scheme~\cite[Section 5.3]{Temmel__SufficientConditionsForUniformBoundsInAbstractPolymerSystemsAndExplorativePartitionSchemes__JSP_2014}. For the remainder of this section, the notation follows~\cite[Section 5]{Temmel__SufficientConditionsForUniformBoundsInAbstractPolymerSystemsAndExplorativePartitionSchemes__JSP_2014}.

\begin{Alg}[Exploration algorithm]
\label{alg_exploration}
Let $H\in\SpanningGraphsOf{G}$. For every $k$, let $H_k$, $T_k$, $U_k$, $B_k$ and $P_k$ be as in \cite[generic exploration algorithm]{Temmel__SufficientConditionsForUniformBoundsInAbstractPolymerSystemsAndExplorativePartitionSchemes__JSP_2014}. The missing parts to construct $H_{k+1}$ from $H_k$ are:\\

Call an edge $(i,j)\in\Edges{C\cap P_k,B_k}\cap\Edges{H}$ \emph{progressing} (short $\Progressing$), if $k=0$, or $k\ge 1$ and either $x_{p_i}>x_i>x_j$ or $x_{p_i}<x_i<x_j$. All other edges in $\Edges{C\cap P_k,B_k}\cap\Edges{H}$ are \emph{reversing} (short $\Reversing$). A vertex is $i\in P_k$ progressing, if all edges $(i,j)$ are progressing and reversing, if there exists a reversing edge $(i,j)$. Finally we say that a connected component $C$ of $H_k|_{U_k}$ is progressing, if all vertices in $C\cap P_k$ are progressing and reversing, if $C\cap P_k$ contains at least one reversing vertex.\\

For $k=0$, we do the same as in the greedy partition scheme~\cite{Temmel__SufficientConditionsForUniformBoundsInAbstractPolymerSystemsAndExplorativePartitionSchemes__JSP_2014}. For $k\ge 1$, we do as follows:\\

If $C$ is an $\Progressing$ connected component of $H_k|_{U_k}$:
\begin{description}
 \item[\StepDirProgBoundary{}] \ActionSelect{} $C\cap S_k:= C\cap P_k$.
 \item[\StepDirProgIgnored{}] As $C\cap I_k=\emptyset$ \ActionRemove{} nothing.
 \item[\StepDirProgParent{}] For each $i\in C\cap S_k$, let $j_i:=\argmin\Set{j\in B_k: (i,j)\in \Edges{H_k}}$. \ActionSelect{} $(i,j_i)$.
 \item[\StepDirProgUncles{}] For each $i\in C\cap S_k$, \ActionRemove{} all $(i,j)\in\Edges{H_k}$ with $j_i\not=j\in B_k$.
 \item[\StepDirProgCousins{}] \ActionRemove{} all of $\Edges{C\cap S_k}\cap\Edges{H_k}$.
\end{description}

If $C$ is a $\Reversing$ connected component of $H_k|_{U_k}$:
\begin{description}
 \item[\StepDirRevBoundary{}] \ActionSelect{} $C\cap S_k:= \Set{i\in C\cap P_k: i\text{ is }\Reversing}$.
 \item[\StepDirRevIgnored{}] $C\cap I_k=\Set{i\in C\cap P_k: i\text{ is }\Progressing}$. \ActionRemove{} all of $\Edges{B_k,(C\cap I_k)}\cap\Edges{H_k}$.
 \item[\StepDirRevParent{}] For each $i\in C\cap S_k$, let
 $j_i:=\argmin\Set{j\in B_k: (i,j)\in \Edges{H_k}\text{ is }\Reversing}$.
  \ActionSelect{} $(i,j_i)$.
 \item[\StepDirRevUnclesReversing{}] For each $i\in C\cap S_k$, \ActionRemove{} every $\Reversing$ $(i,j)\in\Edges{H_k}$ with $j_i\not=j\in B_k$.
 \item[\StepDirRevUnclesReversing{}] For each $i\in C\cap S_k$, \ActionRemove{} every $\Progressing$ $(i,j)\in\Edges{C\cap S_k,B_k}\cap\Edges{H_k}$.
 \item[\StepDirRevCousins{}] \ActionRemove{} all of $\Edges{C\cap S_k}\cap\Edges{H_k}$.
\end{description}
\end{Alg}

\begin{Alg}[Tree edge complement partition]
\label{alg_complement}
Let $\Tree\in\SpanningTrees{G}$. Let $L_k$ be the $k^{th}$ level of $\Tree$. First we determine if an edge $(\Parent{i},i)$ is a progressing (short $\Progressing$) or reversing (short $\Reversing$) edge:
\begin{equation}\label{eq_dir_status}
 \SideStatus:
 \quad I\setminus\Set{o}\setminus L_1\to\Set{\Progressing,\Reversing}
 \quad i\mapsto\begin{cases}
   \Progressing
   &\text{if }x_i\le x_{\Parent{i}}\le x_{\Parent{\Parent{i}}}
    \text{ or }x_i\ge x_{\Parent{i}}\ge x_{\Parent{\Parent{i}}}\\
   \Reversing
   &\text{else.}\,.
  \end{cases}
\end{equation}
For $k\ge 2$ define the \emph{equivalence relation} $\SideEquivalentK$ on $L_k$ by
\begin{equation}\label{eq_eqrel_dir}
 i\SideEquivalentK{}j
 \Iff
 \SideStatus(\TreePath{o}{i}\setminus\Set{o})
  =\SideStatus(\TreePath{o}{j}\setminus\Set{o})\,,
\end{equation}
where the equality on the rhs is taken in $\Set{\Progressing,\Reversing}^{k-1}$ between the labels of the paths $\TreePath{o}{.}$ to the root. This implies that an equivalence class consists of either only same or only non-same nodes and whence we can extend $\SideStatus$ to them. For completeness let $\SideEquivalent{0}$ be the trivial equivalence relation on $L_0$ and be totally disjoint on $L_1$. The equivalence classes possess a \emph{tree structure} consistent with $\Tree$:
\begin{equation}\label{eq_eqrel_dir_class_tree}
 i\SideEquivalentKPlusOne j
 \Then
 \Parent{i}\SideEquivalentK\Parent{j}\,,
\end{equation}
that is equivalent vertices in $L_{k+1}$ have equivalent parents in $L_k$. We therefore call $\SideClass{\Parent{i}}{k}$ the \emph{parent class} of $\SideClass{i}{k+1}$.\\

We partition $E\setminus\Edges{\Tree}$ into $\AdmissibleEdgesTree{\SchemeOne}\uplus\ConflictingEdgesTree{\SchemeOne}$. Edges incident to the root $o$ are treated as in the greedy scheme in~\cite{Temmel__SufficientConditionsForUniformBoundsInAbstractPolymerSystemsAndExplorativePartitionSchemes__JSP_2014}.\\

Let $1\le k\le l$, $j\in L_k$, $i\in L_l$ and $e:=(i,j)\in E\setminus\Edges{\Tree}$. Then $e\in\ConflictingEdgesTree{\SchemeOne}$, iff one of the mutually exclusive conditions~\eqref{eq_conf_dir} holds:
\begin{subequations}\label{eq_conf_dir}
\begin{align}
 &\label{eq_conf_dir_notClassPath}
 \qquad\qquad\qquad
 \SideClass{j}{k}\not\in\TreePath{\SideClass{o}{0}}{\SideClass{i}{l}}
 \,,\\
 &\label{eq_conf_dir_classAncestorReversing}
 l\ge 2
  \,\land\,
 \SideClass{j}{k}\in\TreePath{\SideClass{o}{0}}{\SideClass{\Parent{\Parent{i}}}{l-2}}
  \,\land\,
 x_i\not=x_j
 \,,\\
 &\label{eq_conf_dir_classAncestorProgressing}
 l\ge 2
  \,\land\,
 \SideClass{j}{k}\in\TreePath{\SideClass{o}{0}}{\SideClass{\Parent{\Parent{i}}}{l-2}}
  \,\land\,
 x_i=x_j
  \,\land\,
 \SideStatus(C)=\Progressing
 \,,\\
\intertext{
 where $C\in\TreePath{\SideClass{j}{k}}{\SideClass{\Parent{i}}{l-1}}$ the unique class with $\Parent{C}=\SideClass{j}{k}$,
}
 &\label{eq_conf_dir_smallUncleReversing}
 l\ge 1
  \,\land\,
 \SideClass{j}{k}=\SideClass{\Parent{i}}{l-1}
  \,\land\,
 x_i\not=x_j
  \,\land\,
 \SideStatus(i)=\Reversing
  \,\land\,
 j<\Parent{i}
 \,,\\
 &\label{eq_conf_dir_differentUncleProgressing}
 l\ge 1
  \,\land\,
 \SideClass{j}{k}=\SideClass{\Parent{i}}{l-1}
  \,\land\,
 x_i\not=x_j
  \,\land\,
 \SideStatus(i)=\Progressing
 \,,\\
 &\label{eq_conf_dir_smallUncleProgressing}
 l\ge 1
  \,\land\,
 \SideClass{j}{k}=\SideClass{\Parent{i}}{l-1}
  \,\land\,
 x_i=x_j
  \,\land\,
 \SideStatus(i)=\Progressing
  \,\land\,
 j<\Parent{i}
 \,.
\end{align}
\end{subequations}
And $e\in\AdmissibleEdgesTree{\SchemeOne}$, iff one of the mutually exclusive conditions~\eqref{eq_adm_dir} holds:
\begin{subequations}\label{eq_adm_dir}
\begin{align}
 &\label{eq_adm_dir_equalClass}
 \qquad\qquad\qquad
 \SideClass{j}{k}=\SideClass{i}{l}
 \,,\\
 &\label{eq_adm_dir_classAncestor}
 l\ge 2
  \,\land\,
 \SideClass{j}{k}\in\TreePath{\SideClass{o}{0}}{\SideClass{\Parent{\Parent{i}}}{l-2}}
  \,\land\,
 x_i=x_j
  \,\land\,
 \SideStatus(C)=\Reversing
 \,,\\
\intertext{
 where $C\in\TreePath{\SideClass{j}{k}}{\SideClass{\Parent{i}}{l-1}}$ the unique class with $\Parent{C}=\SideClass{j}{k}$,
}
 &\label{eq_adm_dir_reversingUncleReversing}
 l\ge 1
  \,\land\,
 \SideClass{j}{k}=\Parent{\SideClass{i}{l}}
  \,\land\,
 x_i\not=x_j
  \,\land\,
 \SideStatus(i)=\Reversing
  \,\land\,
 j>\Parent{i}
 \,,\\
 &\label{eq_adm_dir_progessingUncleReversing}
 l\ge 1
  \,\land\,
 \SideClass{j}{k}=\Parent{\SideClass{i}{l}}
  \,\land\,
 x_i=x_j
  \,\land\,
 \SideStatus(i)=\Reversing
 \,,\\
 &\label{eq_adm_dir_uncleProgressing}
 l\ge 1
  \,\land\,
 \SideClass{j}{k}=\Parent{\SideClass{i}{l}}
  \,\land\,
 x_i=x_j
  \,\land\,
 \SideStatus(i)=\Progressing
  \,\land\,
 j>\Parent{i}
 \,.
\end{align}
\end{subequations}
\end{Alg}

\subsection{Tree-operator series}
\label{sec_treeop_series}

This section builds heavily on \cite[Section 4]{Temmel__SufficientConditionsForUniformBoundsInAbstractPolymerSystemsAndExplorativePartitionSchemes__JSP_2014}, in particular~\cite[Proposition 7]{Temmel__SufficientConditionsForUniformBoundsInAbstractPolymerSystemsAndExplorativePartitionSchemes__JSP_2014}. We encode the properties of the singleton trees of the one-sided partition scheme in proposition~\ref{prop_one_singleton_trees} into convergence conditions for cluster expansion series. In this section, we often give three expressions: a generic one without subscript, followed by the specialised ones for the discrete and continuous cases.\\

In the tree-operator of the Penrose partition scheme $\SchemePenrose$~\cite{Penrose__ConvergenceOfFugacityExpansionsForClassicalSystems__SMFA_1967}\cite[Sections 5.1 \& 5.3]{Temmel__SufficientConditionsForUniformBoundsInAbstractPolymerSystemsAndExplorativePartitionSchemes__JSP_2014} a vertex has $s$ children with weight $G(s)$:
\begin{equation*}
 G(s):= \sum_{n=0}^\infty \frac{\Iverson{s=n}}{n!}
  \int_{]-R,R[^n}\IsRHC{\vec{y}}d\vec{y}\,,
\end{equation*}
\begin{equation*}
 G_k(s):=\Iverson{s=0} + (2k+1)\Iverson{s=1} + \binom{k+1}{2}\Iverson{s=2}\,,
\end{equation*}
\begin{equation*}
 G_\CaseC(s) := \Iverson{s=0} + 2\Iverson{s=1} + \frac{\Iverson{s=2}}{2}\,,
\end{equation*}
leading to a tree-operator defined by
\begin{equation*}
 g(\mu):=\sum_{s=0}^\infty G(s) \mu^s\,,
\end{equation*}
\begin{equation*}
 g_k(\mu):= 1 + (2k+1)\mu + \binom{k+1}{2}\mu^2\,,
\end{equation*}
\begin{equation*}
 g_\CaseC(\mu) := 1 + 2\mu + \frac{\mu^2}{2}\,.
\end{equation*}
The function
\begin{equation*}
 \mathfrak{g}(\mu):=\frac{\mu}{g(\mu)}
 \qquad
 \mathfrak{g}_k(\mu):=\frac{\mu}{g_k(\mu)}
 \qquad
 \mathfrak{g}_\CaseC(\mu):=\frac{\mu}{g_\CaseC(\mu)}
\end{equation*}
has a global maximum in $[0,\infty[$ at
\begin{equation*}
 \mathfrak{g}_k\left(\sqrt{\frac{2}{k(k+1)}}\right)
 = \frac{\sqrt{k(k+1)}}{\sqrt{2k(k+1)}+2k+1}
 \qquad
 \mathfrak{g}_\CaseC(\sqrt{2}) = \frac{1}{2+\sqrt{2}}\,.
\end{equation*}

In the tree-operator of the one-sided partition scheme $\SchemeOne$ a non-root vertex has $s$ children with weight $H(s)$:
\begin{equation*}
 H(s):=
  \sum_{n=0}^\infty \frac{\Iverson{s=n}}{n!}
  \int_{[0,R[^n} d\vec{y}\,,
\end{equation*}
\begin{equation*}
 H_k(s) :=
  \sum_{n=0}^\infty \frac{\Iverson{s=n}}{n!}
  \sum_{i=1}^s \sum_{x_i\in\IntegerSet{k+1}\setminus\Set{x_1,\dotsc,x_{i-1}}} 1
 = \sum_{I\in\binom{\IntegerSet{k+1}}{s}} 1
 = \binom{k+1}{s}\,,
\end{equation*}
\begin{equation*}
 H_\CaseC(s):=
 \sum_{n=0}^\infty \frac{\Iverson{s=n}}{n!} \int_{[0,1[^n} d\vec{y}
 = \frac{R^s}{s!}\,,
\end{equation*}
leading to a tree-operator defined by
\begin{equation*}
 h(\mu):=\sum_{s=0}^\infty H(s) \mu^s\,,
\end{equation*}
\begin{equation*}
 h_k(\mu)
 :=\sum_{s=0}^\infty H_k(s)\mu^s
 = \sum_{s=0}^{k+1} \binom{k+1}{s} \mu^s
 = (1+\mu)^{k+1}\,,
\end{equation*}
\begin{equation*}
 h_\CaseC(\mu) := \sum_{s=0}^\infty H_\CaseC(s) \mu^s
 = \sum_{s=0}^\infty \frac{R^s \mu^s}{s!}
 = e^{R\mu}\,.
\end{equation*}
The function
\begin{equation*}
 \mathfrak{h}(\mu):=\frac{\mu}{h(\mu)}
 \qquad
 \mathfrak{h}_k(\mu):=\frac{\mu}{h_k(\mu)}
 \qquad
 \mathfrak{h}_\CaseC(\mu):=\frac{\mu}{h_\CaseC(\mu)}
\end{equation*}
has a global maximum in $[0,\infty[$ at
\begin{equation*}
 \mathfrak{h}_k\left(\frac{1}{k}\right) = \ValK
 \qquad
 \mathfrak{h}_\CaseC(1) = \ValC\,.
\end{equation*}

Consider the fix-point equation
\begin{equation*}
 \mu=\rho h(\mu)
 \qquad
 \mu=\rho h_k(\mu)
 \qquad
 \mu=\rho h_\CaseC(\mu)
 \,.
\end{equation*}
As the derivative of the rhs is positive for $\mu<\frac{1}{k}$ and $\mu<1$:
\begin{equation*}
 \frac{\partial\mu}{\partial\rho} = \frac{\mu}{1-k\mu}
 \qquad
 \frac{\partial\mu}{\partial\rho}=\frac{e^{\mu}}{1-\mu}\,.
\end{equation*}
Therefore, the inverse $\mu(\rho)$ is well-defined up to the maximum. In the continuous case, the fix-point equation is $\rho = \mu e^{-\mu}=\ContractionC(\mu)$, the derivative is$\frac{\partial\rho}{\partial\mu}=(1-\mu)e^{-\mu}$ and the inverse is $\mu_\CaseC(\rho)=\InvC{\rho}$.\\

Consider the operator $T_\rho:\mu\mapsto\rho h(\mu)$. By~\cite[Proposition 7]{Temmel__SufficientConditionsForUniformBoundsInAbstractPolymerSystemsAndExplorativePartitionSchemes__JSP_2014}, if $\rho\le\SingG$, then the generating function $Q$ of rooted, one-sided trees converges and equals
\begin{equation*}
 Q(\rho) := \lim_{n\to\infty} T_\rho^n(0)
 = \sum_{n=1}^\infty \frac{\rho^n}{n!}
  \sum_{T\in\SpanningTreesOf{n}} \prod_{i=1}^n H(s_i)
 = \rho\mu(\rho)\,.
\end{equation*}
For $n\in\Cardinals$, let
\begin{equation*}
 D(\rho,n):= \sum_{T\in\SpanningTreesOf{n}} G(s_1)\prod_{i=2}^n H(s_i)\,.
\end{equation*}
The full series $P$ of all singleton trees for the directed scheme needs and extra step for the first level (children of the root):
\begin{equation*}
 P(\rho) := \sum_{n=1}^\infty \frac{\rho^n}{n!}
  \sum_{T\in\SpanningTreesOf{n}} G(s_1)\prod_{i=2}^n H(s_i)
  = \sum_{n=1}^\infty \frac{\rho^n}{n!} D(\rho,n)\,.
\end{equation*}
We also have the truncated series
\begin{equation*}
 P_N(\rho):=\sum_{n=1}^N \frac{\rho^n}{n!} D(\rho,n)\,.
\end{equation*}

Due to measurability, the resulting structure differs slightly between the discrete and continuous cases. In the discrete case, we have
\begin{equation*}
 Q_\CaseK(\rho) = \rho\mu_\CaseK(\rho)
\end{equation*}
and the full series $P_\CaseK$ fulfils
\begin{equation*}
 P_\CaseK(\rho)
 = \rho\left(1+P_\CaseK(\rho)+2k Q_\CaseK(\rho)
   +\binom{k+1}{2} Q_\CaseK(\rho)^2\right)\,.
\end{equation*}
The rhs is not $g_\CaseK(Q_\CaseK(\rho))$, because a child of the root staying at the same integer than its parent has not chosen a side yet. Thus,
\begin{equation*}
 P_\CaseK(\rho) = \frac{\rho}{1-\rho}
  \left(1+2k Q_\CaseK(\rho)+\binom{k+1}{2} Q_\CaseK(\rho)^2\right)\,.
\end{equation*}
In the continuous case, we have
\begin{equation*}
 Q_\CaseC(\rho) = \rho\mu_\CaseC(\rho) = \rho\InvC{\rho}
\end{equation*}
and the full series $P_\CaseC$ is
\begin{equation*}
 P_\CaseC(\rho) = \rho g_\CaseC(Q_\CaseC(\rho))
 = \rho\left(1 + 2Q_\CaseC(\rho) + \frac{Q_\CaseC(\rho)^2}{2}\right)\,.
\end{equation*}

\begin{Prop}\label{prop_series_cont_div}
If $\rho>\ValC$, then $Q_\CaseC(\rho)$ and $P_\CaseC(\rho)$ diverge.
\end{Prop}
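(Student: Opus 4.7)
The plan is to first show $Q_\CaseC(\rho) = +\infty$ for $\rho > \ValC$ via the tree-operator iteration, and then deduce divergence of $P_\CaseC(\rho)$ from the previously derived identity $P_\CaseC(\rho) = \rho(1 + 2 Q_\CaseC(\rho) + Q_\CaseC(\rho)^2/2)$. The key leverage is that both series have non-negative coefficients, so divergence at real $\rho > 0$ is equivalent to the partial sums being unbounded in $[0,+\infty]$.

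For $Q_\CaseC$, I would work directly with $Q_\CaseC(\rho) = \lim_{n\to\infty} T_\rho^n(0)$, where $T_\rho(\mu) := \rho h_\CaseC(\mu) = \rho e^{\mu}$. Since $T_\rho$ is strictly increasing on $[0,\infty[$, the iterates $\mu_n := T_\rho^n(0)$ starting at $\mu_0 := 0$ form a monotone non-decreasing sequence. If $(\mu_n)$ were bounded, the limit $\mu^\star$ would be a non-negative fixed point of $T_\rho$, equivalently a solution of $\mathfrak{h}_\CaseC(\mu^\star) = \mu^\star e^{-\mu^\star} = \rho$. But $\mathfrak{h}_\CaseC$ has global maximum $\ValC$ on $[0,\infty[$ (attained at $\mu = 1$, as recalled above), so no such $\mu^\star$ exists for $\rho > \ValC$. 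Hence $\mu_n \to +\infty$; since $T_\rho^n(0)$ corresponds to the partial sum over rooted trees of depth at most $n$ of the non-negative-coefficient series defining $Q_\CaseC$, monotone convergence yields $Q_\CaseC(\rho) = +\infty$.

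For $P_\CaseC$, the identity $P_\CaseC(\rho) = \rho(1 + 2 Q_\CaseC(\rho) + Q_\CaseC(\rho)^2/2)$ holds as a formal identity of power series with non-negative coefficients, so coefficient-wise comparison in $\rho$ yields the pointwise bound $P_\CaseC(\rho) \geq 2\rho Q_\CaseC(\rho)$ as extended-real functions on $]0,+\infty[$. Combined with $Q_\CaseC(\rho) = +\infty$ and $\rho > 0$, this forces $P_\CaseC(\rho) = +\infty$. The only delicate point is identifying $T_\rho^n(0)$ with the partial sum over trees of bounded depth, so that divergence of the monotone iteration is equivalent to divergence of the series; this is essentially the content of the cited Proposition~7, where each rooted tree of depth $d$ first enters the $d$-th iterate and then persists, hence the iteration exhausts the tree series precisely.
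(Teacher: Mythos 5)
Your proposal is correct and takes essentially the same route as the paper, whose entire proof is the one-line observation that the trajectory of $0$ under $\mu\mapsto\rho e^{\mu}$ diverges; you simply supply the details (monotonicity of the iterates, absence of a fixed point since $\mu e^{-\mu}\le\ValC$, and the deduction for $P_\CaseC$ via $P_\CaseC(\rho)=\rho\,g_\CaseC(Q_\CaseC(\rho))$) that the paper leaves implicit.
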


\begin{proof}
The trajectory of $0$ under the map $\mu\mapsto\rho e^{\mu}$ diverges.
\end{proof}

\subsection{The shape of the free energy}

\begin{Prop}\label{prop_free_cont_con}
For $\rho\in[0,\ValC]$, we have
\begin{equation}\label{eq_free_cont}
 \FreeC{\rho}
 :=\lim_{t\to\infty} -\frac{\log\BlockCG{\rho}{t}}{t}
 = P_\CaseC(\rho)\,.
\end{equation}
\end{Prop}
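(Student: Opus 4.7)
The strategy is to start from the cluster expansion \eqref{eq_ce} and substitute the explicit description of singleton trees of the one-sided partition scheme $\SchemeOne$ from Proposition~\ref{prop_one_singleton_trees}. This gives
\begin{equation*}
 -\log\BlockCG{\rho}{t}
 = \sum_{n=1}^\infty \frac{\rho^n}{n!}
   \int_{[0,t]^n}
   \sum_{T\in\SpanningTrees{n}} \Iverson{T\in\SingletonTrees{\Cluster{x_1,\dotsc,x_n}}}\,
   dx_1\dotsm dx_n\,,
\end{equation*}
and swapping sum and integral reduces the problem to computing, for each tree shape $T$ rooted at $1$, the volume of the positional configurations compatible with the indicator in~\eqref{eq_one_singleton_trees}.

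Next I would factor this volume along the tree structure. Integrate first over $x_1\in[0,t]$; the children of the root live in $(x_1-R,x_1+R)$ and the hard-core constraint forces at most one on each side, so their contribution to the integral is precisely $G_\CaseC(s_1)$ (the three cases $s_1=0,1,2$ being weighted $1$, $2$, and $\tfrac12$, accounting for direction choices and the symmetry factor). For every non-root vertex $i$ with $s_i$ children, the one-sided condition $x_i\le x_{\Parent{i}}\le x_{\Parent{\Parent{i}}}$ or its mirror forces the children into a half-open interval of length $R$ to the side already chosen by $i$, contributing exactly $H_\CaseC(s_i)=R^{s_i}/s_i!$. Taking the product over all vertices and summing over $T\in\SpanningTrees{n}$ reproduces
\begin{equation*}
 t\cdot D(\rho,n) + R_n(t)\,,
\end{equation*}
where $R_n(t)=O(1)$ in $t$ captures the boundary corrections arising when the root is within distance $\le nR$ of $\partial[0,t]$ and some descendant would fall outside $[0,t]$.

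Dividing by $t$ and passing $t\to\infty$ termwise yields $\sum_{n\ge 1}(\rho^n/n!)D(\rho,n)=P_\CaseC(\rho)$. The main obstacle is justifying this interchange uniformly in $t$, especially at the critical value $\rho=\ValC$. The plan here is to use the tree-operator bound from \cite[Proposition~7]{Temmel__SufficientConditionsForUniformBoundsInAbstractPolymerSystemsAndExplorativePartitionSchemes__JSP_2014}, which guarantees convergence of $Q_\CaseC(\rho)$ and hence of $P_\CaseC(\rho)$ for $\rho\le\ValC$. The truncated series $P_N$ is monotone in $N$ and each summand is bounded pointwise by its $t$-infinite value times $t$, so monotone convergence allows the exchange. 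The boundary term $\sum_n(\rho^n/n!)R_n(t)$ is dominated uniformly in $t$ by a convergent series (since a singleton tree with $n$ vertices is supported in an interval of length at most $(n-1)R$), whence $R_n(t)/t\to 0$. Combining these two controls yields $\lim_{t\to\infty}(-\log\BlockCG{\rho}{t})/t=P_\CaseC(\rho)$, as claimed.
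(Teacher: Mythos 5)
Your overall strategy is the paper's: expand $-\log\BlockCG{\rho}{t}$ over singleton trees of the one-sided scheme, factor each tree integral leaf-inwards into $t\cdot G_\CaseC(s_1)\prod_{i\ge2}H_\CaseC(s_i)$ plus boundary corrections, and pass to the limit. The factorization into $G_\CaseC$ and $H_\CaseC$ weights is correct, and the truncation/monotone-convergence idea for the lower bound is essentially the paper's argument (truncate at depth $d$, note that a size-$n$ tree has depth at most $n-1$, so restricting the root to $[d-1,t-d+1]$ keeps all labels in $[0,t]$, let $t\to\infty$, then $d\to\infty$).

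However, one step as written is genuinely wrong at the endpoint $\rho=\ValC$, which is the case that matters. You claim the boundary term $\sum_n\frac{\rho^n}{n!}R_n(t)$ is ``dominated uniformly in $t$ by a convergent series'' because each tree is supported in an interval of length at most $(n-1)R$. The natural dominating series is then of the form $\sum_n\frac{\rho^n}{n!}\,2(n-1)R\,D(n)$, which is comparable to $\rho\,\partial_\rho P_\CaseC(\rho)$ --- and Proposition~\ref{prop_free_cont_div} (equivalently \eqref{eq_mainC_free_div}) states precisely that this derivative diverges as $\rho\to\SingC^-$; at $\rho=\ValC$ the nonnegative series diverges. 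So dominated convergence for the boundary term is unavailable exactly where you need it. The gap is avoidable, and the paper's proof shows how: the corrections $R_n(t)$ are termwise non-positive (dropping the $[0,t]$ constraint on descendants only enlarges the integral), so the upper bound $-\log\BlockCG{\rho}{t}\le t\,P_\CaseC(\rho)$ holds exactly with no boundary control at all, and the lower bound never touches the full correction series because the truncation in $n$ (or depth) is performed \emph{before} $t\to\infty$. You should restructure your final paragraph along these one-sided lines rather than attempting a two-sided domination of the correction series.
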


\begin{proof}
Fix $t\ge0$ and $\rho\in[0,\ValC]$. For $T\in\SpanningTreesOf{n}$, let
\begin{equation*}
 T(t):=
 \int_{[0,t]^n} \Iverson{T\in\SingletonTrees{\Cluster{\vec{x}}}} d\vec{x}
 \,.
\end{equation*}
We also have
\begin{equation*}
 R(t,n):=
 \int_{[0,t]^n} \Cardinality{\SingletonTrees{\Cluster{\vec{x}}}} d\vec{x}
 = \int_{[0,t]^n} \sum_{T\in\SpanningTreesOf{n}}
  \Iverson{T\in\SingletonTrees{\Cluster{\vec{x}}}} d\vec{x}
 = \sum_{T\in\SpanningTreesOf{n}} T(t)
\end{equation*}
and
\begin{equation*}
 S(\rho,t) := -\log\BlockCG{\rho}{t}
 = \sum_{n=1}^\infty \frac{\rho^n}{n!}
 \int_{[0,t]^n} \Cardinality{\SingletonTrees{\Cluster{\vec{x}}}} d\vec{x}
 = \sum_{n=1}^\infty \frac{\rho^n}{n!} R(t,n)
 \,.
\end{equation*}

Summing in from the leaves à la Cammarotta~\cite{Cammarota__DecayOfCorrelationsForInfiniteRangeInteractionsInUnboundedSpinSystems__CMP_1982} we bound
\begin{equation*}
 T(t)\le t\, G_\CaseC(s_1)\prod_{i=2}^n H_\CaseC(s_i)
 \quad\text{ and }\quad
 R(t,n)
 \le t D_\CaseC(\rho,n)
 \,.
\end{equation*}
This implies
\begin{equation*}
 S(\rho,t)
 \le t \sum_{n=1}^\infty \frac{\rho^n}{n!} D_\CaseC(\rho,n)
 = t P_\CaseC(\rho)\,.
\end{equation*}

On the other hand, if the depth of the tree $T$ is at most $d$ and the root label $x_1\in[d-1,r-d+1]$, then the vertex labels of $T$ are contained in $[0,t]$. Hence,
\begin{equation*}
 T(t) \ge (t-2(d-1))^+ G_\CaseC(s_1)\prod_{i=2}^n H_\CaseC(s_i)\,.
\end{equation*}
A size $n$ tree has at most depth $(n-1)$:
\begin{equation*}
 R(t,n) \ge (t-2(d-1))^+ \Iverson{n\le d} D_\CaseC(\rho,n)\,.
\end{equation*}
Hence,
\begin{align*}
 S(\rho,r)
 \ge{}& \sum_{n=1}^\infty \frac{\rho^n}{n!}
  (t-2(d-1))^+ \Iverson{n\le d} D_\CaseC(\rho,n)
 \\={}& (t-2(d-1))^+ \sum_{n=1}^d \frac{\rho^n}{n!} D_\CaseC(\rho,n)
 \\={}& (t-2(d-1))^+ P_{d,c}(\rho)\,.
\end{align*}
For small $\varepsilon>0$, we may choose first $d_\varepsilon$ such that $P_{d_\varepsilon,c}(\rho)\ge (1-\varepsilon)P_\CaseC(\rho)$ and then $t_\varepsilon$ such that $(t_\varepsilon-2(d_\varepsilon-1)R)^+\ge (1-\varepsilon)t_\varepsilon$. Thus, for all $d\ge d_\varepsilon$ and $t\ge t_\varepsilon$, we have
\begin{equation*}
 S(\rho,t)\ge (t-2(d-1))^+ P_{d,c}(\rho)
 \ge (1-\varepsilon)^2 t P_\CaseC(\rho)\,.
\end{equation*}
\end{proof}

\begin{Prop}\label{prop_cblock_lim}
For $\rho\in[0,\ValC]$, we have
\begin{equation*}
  \lim_{s\to 0, t\to\infty} -\frac{\log\CondBlockCG{\rho}{s}{t}}{s}
  = \InvC{\rho}\,.
\end{equation*}
\end{Prop}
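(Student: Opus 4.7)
The plan is to pin down the infinite-volume limit $f(s) := \lim_{t\to\infty}\CondBlockCG{\rho}{s}{t}$ explicitly on $[0,1]$, and then read off the claim by a direct integration. By the monotonicity~\eqref{eq_mainC_cond_mon} and the uniform lower bound $\CondBlockCG{\rho}{s}{t}\ge\exp(-s\InvC{\rho})$ from proposition~\ref{prop_lll_cont}, this limit exists, satisfies $f(0)=1$, and stays bounded below by $\exp(-s\InvC{\rho})$. Passing to $t\to\infty$ in the recursion~\eqref{eq_fe_cond_c} (with dominated convergence justified by the LLL bound applied to $\CondBlockCG{\rho}{1-x}{t+x-1}^{-1}$) produces the self-referential integral equation
\begin{equation*}
 f(s) = 1 - \rho\int_0^s \frac{dx}{f(1-x)}\,,\qquad s\in[0,1]\,,
\end{equation*}
so $f$ is $C^1$ with $f'(s) = -\rho/f(1-s)$.

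The crucial algebraic observation is that $F(s) := f(s)\,f(1-s)$ has identically vanishing derivative:
\begin{equation*}
 F'(s) = f'(s)\,f(1-s) - f(s)\,f'(1-s) = -\rho - (-\rho) = 0\,,
\end{equation*}
so $F\equiv F(0) = f(1)$ and in particular $f(1-x)^{-1} = f(x)/f(1)$. Substituting back into the integral equation collapses it to the linear ODE $f'(s) = -\alpha\,f(s)$ for the constant $\alpha := \rho/f(1)$, whose solution with $f(0)=1$ is $f(s) = \exp(-\alpha s)$. Setting $s=1$ gives the fix-point $\alpha\exp(-\alpha) = \rho$; the correct branch is singled out by $\alpha = \rho/f(1)\le\rho/\exp(-\InvC{\rho}) = \InvC{\rho}\le 1$, so $\alpha = \InvC{\rho}$ and $f(s) = \exp(-s\InvC{\rho})$ on $[0,1]$.

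To finish, differentiating~\eqref{eq_fe_cond_c} and using the telescoping identity $\CondBlockCG{\rho}{s}{t}\,\CondBlockCG{\rho}{1-s}{t+s-1} = \CondBlockCG{\rho}{1}{t+s-1}$ (obtained by unfolding the definition via $\BlockCG{\rho}{\cdot}$) gives $\partial_s\log\CondBlockCG{\rho}{s}{t} = -\rho/\CondBlockCG{\rho}{1}{t+s-1}$, and integrating from $0$ to $s$ yields
\begin{equation*}
 -\frac{\log\CondBlockCG{\rho}{s}{t}}{s}
 = \frac{1}{s}\int_0^s\frac{\rho\,du}{\CondBlockCG{\rho}{1}{t+u-1}}\,.
\end{equation*}
Since $\CondBlockCG{\rho}{1}{t+u-1}$ decreases monotonically (and therefore uniformly in $u\in[0,1]$) to $f(1)=\exp(-\InvC{\rho})$ as $t\to\infty$, the right-hand side converges in the joint sense $s\to 0,\,t\to\infty$ to $\rho/\exp(-\InvC{\rho}) = \rho\exp(\InvC{\rho}) = \InvC{\rho}$, where the last step uses the defining relation $\rho = \InvC{\rho}\exp(-\InvC{\rho})$. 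The main obstacle is the closed-form identification of $f$: once one spots that $f(s)\,f(1-s)$ is constant, thanks to the symmetric structure of the delay integral equation, the rest of the argument is a direct integration.
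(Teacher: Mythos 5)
Your proof is correct, but it takes a genuinely different route from the paper's. The paper proves proposition~\ref{prop_cblock_lim} with the cluster-expansion machinery of section~\ref{sec_ce}: it writes $-\log\CondBlockCG{\rho}{s}{t}$ as the part of the one-sided tree expansion whose leftmost point lies in $[0,s]$, symmetrises over that point, and identifies the resulting generating function of rooted one-sided trees with $\InvC{\rho}/\rho$ via the tree-operator fixed point of section~\ref{sec_treeop_series}, finishing by comparing the two iterated limits. You instead stay entirely within the inductive framework of section~\ref{sec_ind_c}: you pass to the limit $t\to\infty$ in the recursion~\eqref{eq_fe_cond_c} (legitimate, since monotonicity in $t$ plus the uniform lower bound of proposition~\ref{prop_lll_cont} give existence of $f$ and domination of the integrand by $e^{\InvC{\rho}}$), and then solve the limiting delay equation in closed form. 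The observation that $f(s)f(1-s)$ has vanishing derivative is the key new ingredient and checks out; it collapses the delay equation to a linear ODE, and the bound $f(1)\ge e^{-\InvC{\rho}}$ correctly selects the branch $\alpha=\InvC{\rho}\le 1$ of $\alpha e^{-\alpha}=\rho$. Your telescoping identity $\CondBlockCG{\rho}{s}{t}\,\CondBlockCG{\rho}{1-s}{t+s-1}=\CondBlockCG{\rho}{1}{t+s-1}$ is a correct unfolding of the definition, and the resulting representation $-\log\CondBlockCG{\rho}{s}{t}=\int_0^s\rho\,\CondBlockCG{\rho}{1}{t+u-1}\TMO\,du$ converges uniformly in $s\in\,]0,1]$ as $t\to\infty$, which actually treats the joint limit more carefully than the paper's argument (which only exhibits equality of the iterated limits). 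What your route buys: it is elementary and cluster-expansion-free, and it yields the exact infinite-volume ratio $\lim_{t\to\infty}\CondBlockCG{\rho}{s}{t}=e^{-s\InvC{\rho}}$ for $s\in[0,1]$, showing that proposition~\ref{prop_lll_cont} is sharp in the limit. What the paper's route buys: the combinatorial identification of $\InvC{\rho}/\rho$ as the generating function of one-sided singleton trees, which is reused throughout section~\ref{sec_ce}.
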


\begin{proof}
Fix $\rho\in]0,\ValC]$, as the statement is trivial for $\rho=0$. We have
\begin{align*}
 -\log\CondBlockCR{s}{t}
 ={}&-\log\BlockCR{s+t}+\log\BlockCR{t}
 \\={}& \sum_{n=1}^\infty \frac{\rho^n}{n!}
  \int_{[0,s+t]^\setminus[0,s]^n}
   \Cardinality{\SingletonTrees{\Cluster{\vec{x}}}}
  d\vec{x}
 \\={}& \sum_{n=1}^\infty \frac{\rho^n}{n!}
  \int_0^s n
   \int_{[y,s+t]^{n-1}}
    \Cardinality{\SingletonTrees{\Cluster{y,\vec{x}}}}
   d\vec{x}
  dy
 \\={}& \rho \int_0^s
  \sum_{n=0}^\infty \frac{\rho^n}{n!}
   \int_{[y,s+t]^n}
    \Cardinality{\SingletonTrees{\Cluster{y,\vec{x}}}}
   d\vec{x}
  dy
\end{align*}
By~\cite[Proposition 7]{Temmel__SufficientConditionsForUniformBoundsInAbstractPolymerSystemsAndExplorativePartitionSchemes__JSP_2014}, we have that
\begin{equation*}
 \sum_{n=0}^\infty \frac{\rho^n}{n!}
  \int_{[0,\infty[^n}
    \Cardinality{\SingletonTrees{\Cluster{y,\vec{x}}}}
   d\vec{x}
 = \sum_{n=0}^\infty \frac{\rho^n}{n!}
  \sum_{T\in\SpanningTreesOf{n+1}} \prod_{i=1}^{n+1} H(s_i)
 = \frac{\InvC{\rho}}{\rho}\,.
\end{equation*}
On the one hand, we majorate as follows:
\begin{align*}
 -\log\CondBlockCR{s}{t}
 \le{}& \rho \int_0^s \sum_{n=0}^\infty \frac{\rho^n}{n!}
  \int_{[y,\infty[^n}
    \Cardinality{\SingletonTrees{\Cluster{y,\vec{x}}}}
   d\vec{x}
  dy
 \\={}& \rho s \sum_{n=0}^\infty \frac{\rho^n}{n!}
  \int_{[0,\infty[^n}
    \Cardinality{\SingletonTrees{\Cluster{y,\vec{x}}}}
   d\vec{x}
 \\={}&s\InvC{\rho}\,.
\end{align*}
On the other hand, we minorate as follows:
\begin{align*}
 -\log\CondBlockCR{s}{t}
 \ge{}& \rho \int_0^s \sum_{n=0}^{\Floor{t}} \frac{\rho^n}{n!}
  \int_{[y,\infty[^n}
    \Cardinality{\SingletonTrees{\Cluster{y,\vec{x}}}}
   d\vec{x}
  dy
 \\={}& \rho s \sum_{n=0}^{\Floor{t}} \frac{\rho^n}{n!}
  \int_{[y,\infty[^n}
    \Cardinality{\SingletonTrees{\Cluster{y,\vec{x}}}}
   d\vec{x}
  dy\,.
\end{align*}
Therefore, the limits
\begin{equation*}
 \lim_{s\to 0}\lim_{t\to\infty} -\frac{\log\CondBlockCR{s}{t}}{s}
 = \lim_{s\to 0} \frac{\InvC{\rho}}{s}
 = \InvC{\rho}
\end{equation*}
and
\begin{equation*}
 \lim_{t\to\infty}\lim_{s\to 0} -\frac{\log\CondBlockCR{s}{t}}{s}
 = \lim_{t\to 0} \rho \sum_{n=0}^{\Floor{t}} \frac{\rho^n}{n!}
  \int_{[y,\infty[^n}
    \Cardinality{\SingletonTrees{\Cluster{y,\vec{x}}}}
   d\vec{x}
  dy
 = \InvC{\rho}
\end{equation*}
are equal.
\end{proof}

\begin{Prop}\label{prop_free_cont_div}
We have
\begin{equation*}
 \lim_{\rho\to{\SingC}^{-}} \frac{\partial}{\partial\rho} \FreeC{\rho}
  = \infty\,.
\end{equation*}
\end{Prop}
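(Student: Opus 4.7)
The plan is to differentiate the explicit formula from Proposition~\ref{prop_free_cont_con}, namely $\FreeC{\rho} = \rho\bigl(1 + 2\rho\InvC{\rho} + \tfrac{1}{2}\rho^2\InvC{\rho}^2\bigr)$, and isolate the divergent contribution coming from $\partial\InvC{\rho}/\partial\rho$. Writing $\mu := \InvC{\rho}$ and using the defining relation $\rho = \mu e^{-\mu}$ from~\eqref{eq_funC}, one has $\partial\rho/\partial\mu = (1-\mu)e^{-\mu}$ and therefore
$$\frac{\partial\mu}{\partial\rho} = \frac{e^{\mu}}{1-\mu}\,,$$
as already recorded in Section~\ref{sec_treeop_series}. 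Since $\mu \to 1$ precisely when $\rho \to \ValC = \SingC$, this derivative blows up as $\rho\uparrow\SingC$.

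Applying the chain rule to $\FreeC{\rho} = \rho + 2\rho^2\mu + \tfrac{1}{2}\rho^3\mu^2$ then yields
$$\frac{\partial}{\partial\rho}\FreeC{\rho}
 = 1 + 4\rho\mu + \tfrac{3}{2}\rho^2\mu^2 + \rho^2(2+\rho\mu)\,\frac{\partial\mu}{\partial\rho}\,.$$
As $\rho\uparrow\SingC$, the first three terms converge to the finite value $1 + 4/e + 3/(2e^2)$, while the prefactor $\rho^2(2+\rho\mu)$ tends to the positive constant $e^{-2}(2+e^{-1})$. Multiplied by $\partial\mu/\partial\rho\to +\infty$, the right-hand side diverges to $+\infty$, which is the claim.

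The only point requiring a brief justification is that termwise differentiation of the composition $\rho\mapsto\FreeC{\rho}$ is legitimate on $[0,\SingC[$. This follows from the fact that $\InvC$ is a real-analytic branch of (the negative of) the Lambert $W$ function on that interval, cf.~Section~\ref{sec_lampert}, so the expression in~\eqref{eq_mainC_free_con} is real-analytic on $[0,\SingC[$ and the classical chain rule applies. There is no genuine obstacle; the whole divergence is simply driven by the vanishing of $\partial\rho/\partial\mu = (1-\mu)e^{-\mu}$ at the critical point $\mu = 1$, that is, by $\ContractionC$ attaining its maximum with zero derivative at the boundary of its domain of monotonicity.
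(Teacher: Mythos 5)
Your proof is correct and follows essentially the same route as the paper: the divergence is driven entirely by $\frac{\partial}{\partial\rho}\InvC{\rho}=\frac{e^{\InvC{\rho}}}{1-\InvC{\rho}}$ blowing up as $\InvC{\rho}\to 1$. In fact you supply the chain-rule computation and the check that the coefficient $\rho^2(2+\rho\InvC{\rho})$ is bounded away from zero, details which the paper's one-line proof leaves implicit.
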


\begin{proof}
We have
\begin{equation*}
 \frac{\partial}{\partial\rho}\InvC{\rho}
 = \frac{e^{\InvC{\rho}}}{1-\InvC{\rho}}\,.
\end{equation*}
This explodes at $\ValC$, as $\InvC{\ValC}=1$.
\end{proof}

\bibliographystyle{plain}
\bibliography{ref}

\end{document}